
\documentclass[sigconf]{aamas}  

\usepackage{booktabs}
\usepackage{framed}

\setcopyright{ifaamas}  
\acmDOI{}  
\acmISBN{}  
\acmConference[AAMAS'18]{Proc.\@ of the 17th International Conference on Autonomous Agents and Multiagent Systems (AAMAS 2018)}{July 10--15, 2018}{Stockholm, Sweden}{M.~Dastani, G.~Sukthankar, E.~Andr\'{e}, S.~Koenig (eds.)}  
\acmYear{2018}  
\copyrightyear{2018}  
\acmPrice{}  



\begin{document}

\title{Selling Multiple Items via Social Networks}  





\author{Dengji Zhao$^\ast$, Bin Li$^\dagger$, Junping Xu$^\ast$, Dong Hao$^\dagger$, and Nicholas R. Jennings$^\ddagger$}
\affiliation{%
  \institution{$^{\ast}$ShanghaiTech University, Shanghai, China}
}
\email{{zhaodj, xujp}@shanghaitech.edu.cn}
\affiliation{\institution{$^{\dagger}$University of Electronic Science and Technology of China, Chengdu, China}
}
\email{{libin@std., haodong@}uestc.edu.cn}
\affiliation{%
  \institution{$^{\ddagger}$Imperial College London, London, United Kingdom}
}
\email{n.jennings@imperial.ac.uk}

\begin{abstract}  
We consider a market where a seller sells multiple units of a commodity in a social network. Each node/buyer in the social network can only directly communicate with her neighbours, i.e. the seller can only sell the commodity to her neighbours if she could not find a way to inform other buyers. In this paper, we design a novel promotion mechanism that incentivizes all buyers, who are aware of the sale, to invite all their neighbours to join the sale, even though there is no guarantee that their efforts will be paid. While traditional sale promotions such as sponsored search auctions cannot guarantee a positive return for the advertiser (the seller), our mechanism guarantees that the seller's revenue is better than not using the advertising. More importantly, the seller does not need to pay if the advertising is not beneficial to her. 
\end{abstract}

%

\keywords{Mechanism design; information diffusion; revenue maximisation}  

\maketitle



\section{Introduction}
Marketing is one of the key operations for a service or product to survive. To do that, companies often use newspapers, tv, social media, search engines to do advertisements. Indeed, most of the revenue of social media and search engines comes from paid advertisements. According to Statista, Google's ad revenue amounted to almost $79.4$ billion US dollars in 2016. However, whether all the advertisers actually benefit from their advertisements is not clear and is difficult to monitor. Although most search engines use market mechanims like generalised second price auctions to allocate advertisements and only charge the advertisers when users click their ads, not all clicks lead to a purchase~\cite{gsp2007,varian2009online}. That said, the advertisers may pay user clicks that have no value to them.  

Therefore, in this paper, we propose a novel advertising mechanism for a seller (to sell services or products) that does not charge the seller unless the advertising brings an increase in revenue. We model all (potential) buyers of a service/product as a large social network where each buyer is linked with some other buyers (known as neighbours). The seller is also located somewhere in the social network. Before the seller finds a way to inform more buyers about her sale, she can only sell her products to her neighbours. In order to attract more buyers to increase her revenue, the seller may pay to advertise the sale via newspapers, social media, search engines etc. to reach/inform more potential buyers in the social network. However, if the advertisements do not bring any valuable buyers, the seller loses the investment on the advertisements.

Our advertising mechanism does not rely on any third party such as newspapers or search engines to do the advertisements. The mechanism is owned by the seller. The seller just needs to invite all her neighbours to join the sale, then her neighbours will further invite their neighbours and so on. In the end, all buyers in the social network will be invited to participate in the sale. Moreover, all buyers are not paid in advance for their invitations and they may not get paid if their invitations are not beneficial to the seller. Although some buyers may never get paid for their efforts in the advertising, they are still incentivized to do so, which is one of the key features of our advertising mechanism. This significantly differs from existing advertising mechanisms used on the Internet.

More importantly, our advertising mechanism not only incentivizes all buyers to do the advertising, but also guarantees that the seller's revenue increases. That is, her revenue is never worse than the revenue she can get if she only sells the items to her neighbours.

A special case of this problem was investigated by Li et al.~\cite{li2017mechanism}. They have considered the setting when the seller sells only one item and proposed a mechanism, called the information diffusion mechanism, that guarantees that all buyers will truthfully report their willing payments (i.e. valuations) and also invite all their neighbours to join the sale. They have shown that the mechanism gives a revenue which is at least the revenue the seller can receive with a second price auction among only the seller's neighbours.

This paper generalises the mechanism proposed by Li et al.~\cite{li2017mechanism} to settings where the seller sells multiple items. This generalisation still guarantees that reporting their true valuations and inviting all their neighbours is a dominant strategy for all buyers who are aware of the sale. Moreover, the revenue of the seller is also improved compared with the revenue she can achieve with traditional market mechanisms such as VCG~\cite{vickrey1961counterspeculation, clarke1971multipart, groves1973incentives}. 

Maximising the seller's revenue has been well studied in the literature, but the existing models assumed that the buyers are all known to the seller and the aim is to maximize the revenue among the fixed number of buyers. Given the number of buyers is fixed, if we have some prior information about their valuations, Myerson~\cite{myerson1981optimal} proposed a mechanism by adding a reserve price to the original VCG mechanism. Myerson's mechanism maximises the seller's revenue, but requires the distributions of buyers' valuations to compute the reserve price. Without any prior information about the buyers' valuations, we cannot design a mechanism that can maximise the revenue in all settings (see Chapter 13 of~\cite{nisan2007algorithmic} for a detailed survey). Goldberg et al.~\cite{goldberg2001competitive,goldberg2001digital} have considered how to optimize the revenue for selling multiple homogeneous items such as digital goods like software (unlimited supply). Especially, the seller can choose to sell less with a higher price to gain more. 

In terms of incentivizing people to share information (like buyers inviting their neighbours), there also exists a growing body of work~\cite{kempe2003maximizing, rogers2010diffusion, pickard2011time, emek2011mechanisms}. Their settings are essentially different from ours however. They considered either how information is propagated in a social network or how to design reward mechanisms to incentivize people to invite more people to accomplish a challenge together. The solution offered by the MIT team under the DARPA Network Challenge is a nice example, where they designed a novel reward mechanism to share the award if they win the challenge to attract many people via social network to join the team, which eventually helped them to win~\cite{pickard2011time}. 

The remainder of the paper is organized as follows. Section~\ref{sect_model} describes the model of the advertising problem. Section~\ref{sect_idm} briefly reviews the mechanism proposed by Li et al.~\cite{li2017mechanism}. Section~\ref{sect_gidm} gives our generalisation and its key properties are analysed in Section~\ref{sect_properties}. Finally, we conclude in Section~\ref{sect_con}.

%
%
%
%

\section{The Model}
\label{sect_model}
We consider a seller $s$ sells $\mathcal{K} \geq 1$ items in a social network. In addition to the seller, the social network consists of $n$ nodes denoted by $N = \{1, \cdots, n\}$, and each node $i\in N \cup \{s\}$ has a set of neighbours denoted by $r_i \subseteq N \cup \{s\}$. Each $i\in N$ is a buyer of the $\mathcal{K}$ items.

For simplicity, we assume that the $\mathcal{K}$ items are homogeneous and each buyer $i\in N$ requires at most one unit of the item and has a valuation $v_i \geq 0$ for one or more units.

Without any advertising, seller $s$ can only sell to her neighbours $r_s$ as she is not aware of the rest of the network and the other buyers also do not know the seller $s$. In order to maximize $s$'s profit, it would be better if all buyers in the network could join the sale.

Traditionally, the seller may pay some of her neighbours to advertise the sale to their neighbours, but the neighbours may not bring any valuable buyers and cost the seller money for the advertisement. Therefore, our goal here is to design a a kind of cost-free advertising mechanism such that all buyers, who are aware of the sale, are incentivized to invite all their neighbours to join the sale with no guarantee that their efforts will be paid. Li et al.~\cite{li2017mechanism} have shown that this is achievable when $\mathcal{K}=1$. In this paper, we generalize their approach to $\mathcal{K}\geq 1$. 

Let us first formally describe the model. Let $\theta_i = (v_i, r_i)$ be the \emph{type} of buyer $i\in N$, $\theta = (\theta_1,\cdots,\theta_n)$ be the type profile of all buyers and $\theta_{-i}$ be the type profile of all buyers except $i$. $\theta$ can also be represented by $(\theta_i, \theta_{-i})$. Let $\Theta_i$ be the type space of buyer $i$ and $\Theta$ be the type profile space of all buyers.

The advertising mechanism consists of an \emph{allocation policy} $\pi$ and a \emph{payment policy} $x$. The mechanism requires each buyer, who is aware of the sale, to report her valuation to the mechanism and invite/inform all her neighbours about the sale. Let $v_i^\prime$ be the valuation report of buyer $i$ and $r_i^\prime \subseteq r_i$ be the neighbours $i$ has invited. Let $\theta_i^\prime = (v_i^\prime, r_i^\prime)$ and $\theta^\prime = (\theta_1^\prime,\cdots, \theta_n^\prime)$, where $\theta_j^\prime = nil$ if $j$ has never been invited by any of her neighbours $r_j$ or $j$ does not want to participate. Given the action profile $\theta^\prime$ of all buyers, $\pi_i(\theta^\prime) \in \{0,1\}$, $1$ means that $i$ receives one item, while $0$ means $i$ does not receive any item. $x_i(\theta^\prime) \in \mathbb{R}$ is the payment that $i$ pays to the mechanism, $x_i(\theta^\prime) < 0$ means that $i$ receives $|x_i(\theta^\prime)|$ from the mechanism. 

\begin{definition}
	Given an action profile $\theta^\prime$ of all buyers, an \emph{invitation chain} from the seller $s$ to a buyer $i$ is a buyer sequence of $(s,j_1,\cdots, j_l,j_{l+1}, \cdots, j_m,i)$ such that $j_1\in r_s$, for all $1<l \leq m$ $j_l \in r_{j_{l-1}}^\prime$, $i\in r_{j_m}^\prime$ and no buyer appears twice in the sequence, i.e. it is acyclic.
\end{definition}

A buyer cannot invite buyers who are not her neighbours and a buyer who has never been invited by her neighbours cannot join the sale, therefore not all action profiles are feasible. 

\begin{definition}
	Given the buyers' type profile $\theta$, an action profile $\theta^\prime$ is \emph{feasible} if for all $i\in N$, 
	\begin{itemize}
		\item $\theta_i^\prime \neq nil$ if and only if there exists an \emph{invitation chain} from the seller $s$ to $i$ following the action profile of $\theta_{-i}^\prime$.
		\item if $\theta_i^\prime \neq nil$, then $r_i^\prime \subseteq r_i$.
	\end{itemize}
	Let $\mathcal{F}(\theta)$ be the set of all feasible action profiles of all buyers under type profile $\theta$.
\end{definition}

The advertising mechanism $(\pi,x)$ is defined only on feasible action profiles. In the following, we define the related properties of the mechanism.

\begin{definition}
	An allocation $\pi$ is \emph{feasible} if for all $\theta\in \Theta$, for all $\theta^\prime \in \mathcal{F}(\theta)$,
	\begin{itemize}
		\item for all $i\in N$, if $\theta_i^\prime = nil$, then $\pi_i(\theta^\prime) = 0$.
		\item $\sum_{i\in N}\pi_i(\theta^\prime) \leq \mathcal{K}$.
	\end{itemize}
\end{definition}
A feasible allocation does not allocate items to buyers who have never participated and it does not allocate more than $\mathcal{K}$ items. In the rest of this paper, we only consider feasible allocations.

\begin{definition}
	An allocation $\pi$ is \emph{efficient} if for all $\theta\in \Theta$, for all $\theta^\prime \in \mathcal{F}(\theta)$,
	$$\pi \in {\arg\max}_{\pi^\prime \in \Pi} \sum_{i\in N, \theta'_i \neq nil} \pi^\prime_i(\theta')v_i^\prime$$
	where $\Pi$ is the set of all feasible allocations.
\end{definition}

Given a buyer $i$ of type $\theta_i = (v_i, r_i)$ and a feasible action profile $\theta^\prime$, the \emph{utility} of $i$ under a mechanism $(\pi, x)$ is quasilinear and defined as:
\begin{equation*}
	u_i(\theta_i, \theta^\prime, (\pi, x)) = \pi_i(\theta^\prime)v_i  - x_i(\theta^\prime).
\end{equation*}
For simplicity, we will use $u_i(\theta_i, \theta^\prime)$ to represent $u_i(\theta_i, \theta^\prime, (\pi, x))$ as $(\pi,x)$ is clear and does not change.

We say a mechanism is individually rational if for each buyer, her utility is non-negative when she truthfully reports her valuation, no matter which neighbours she invites and what the others do. That is a buyer should not lose as long as she reports her valuation truthfully, i.e. she is not forced to invite her neighbours.

\begin{definition}
	A mechanism $(\pi,x)$ is \emph{individually rational} (IR) if $u_i(\theta_i, \theta^\prime) \geq 0$ for all $\theta\in \Theta$, for all $i\in N$, for all $\theta^\prime \in \mathcal{F}(\theta)$ such that $\theta_i^\prime = (v_i, r_i^\prime)$.
\end{definition}

Different from the traditional mechanism design settings, in this model, we want to incentivize buyers to not only just report their valuations truthfully, but also invite all their neighbours to join the sale/auction (the advertising part). Therefore, we extend the definition of incentive compatibility to cover the invitation of their neighbours. Specifically, a mechanism is incentive compatible (or truthful) if for all buyers who are invited by at least one of their neighbours, reporting their valuations truthfully to the mechanism and further inviting all their neighbours to join the sale is a dominant strategy. 

\begin{definition}
	A mechanism $(\pi, x)$ is \emph{incentive compatible} (IC) if
	$u_i(\theta_i, \theta^\prime) \geq u_i(\theta_i, \theta^{\prime\prime})$ for all $\theta\in \Theta$, for all $i\in N$, for all $\theta^\prime, \theta^{\prime\prime} \in \mathcal{F}(\theta)$ such that $\theta_i^\prime = \theta_i$ {\color{red}and for all $j\neq i$, $\theta_j^{\prime\prime} = \theta_j^{\prime}$ if there exists an invitation chain from the seller $s$ to $j$ following the action profile of $(\theta_i^{\prime\prime}, \theta_{-i}^{\prime})$}.
\end{definition}

Given a feasible action profile $\theta^\prime$ and a mechanism $(\pi, x)$, the seller's \emph{revenue} generated by $(\pi,x)$ is defined by the sum of all buyers' payments, denoted by $R^{(\pi,x)}(\theta^\prime) = \sum_{i\in N} x_i(\theta^\prime)$.

\begin{definition}
	A mechanism $(\pi, x)$ is \emph{weakly budget balanced} if for all $\theta\in \Theta$, for all $\theta^\prime \in \mathcal{F}(\theta)$,  $R^{(\pi,x)}(\theta^\prime) \geq 0$.
\end{definition}

In this paper, we design a mechanism that is IC and IR for the buyers to help the seller propagate the sale information without being paid in advance.

\section{The Information Diffusion Mechanism for $\mathcal{K}=1$}
\label{sect_idm}
In this section, we review the mechanism proposed by Li et al.~\cite{li2017mechanism} for the case of $\mathcal{K}=1$. Li et al. considered an advertising mechanism design for a seller to sell a single item in a social network. The essence of their approach is that a buyer is only rewarded for advertising if her invitations increase social welfare and the reward guarantees that inviting all neighbours is a dominant strategy for all buyers.

Their \emph{information diffusion mechanism} is outlined below:
\begin{framed}
	\noindent\textbf{Information Diffusion Mechanism (IDM)}\\
	\rule{\textwidth}{0.5pt}
	\begin{enumerate}
		\item Given a feasible action profile $\theta^\prime$, identify the buyer with the highest valuation, denoted by $i^*$.
		\item Find all \emph{diffusion critical buyers} of $i^*$, denoted by $C_{i^*}$. $j\in C_{i^*}$ if and only if without $j$'s action $\theta_{j}^\prime$, there is no invitation chain from the seller $s$ to $i^*$ following $\theta_{-j}^\prime$, i.e. $i^*$ is not able to join the sale without $j$. 
		\item For any two buyers $i,j \in C_{i^*} \cup \{i^*\}$, define an order $\succ_{i^*}$ such that $i \succ_{i^*} j$ if and only if all invitation chains from $s$ to $j$ contain $i$. 
		\item For each $i\in C_{i^*} \cup \{i^*\}$, if $i$ receives the item, the payment of $i$ is the highest valuation report without $i$'s participation. Formally, let $N_{-i}$ be the set of buyers each of whom has an invitation chain from $s$ following $\theta_{-i}^\prime$, $i$'s payment to receive the item is $p_i = max_{j\in N_{-i} \wedge \theta_j^\prime \neq nil} v_j^\prime$. 
		\item The seller initially gives the item to the buyer $i$ ranked first in $C_{i^*} \cup \{i^*\}$, let $l = 1$ and repeat the following until the item is allocated.
		\begin{itemize}
			\item if $i$ is the last ranked buyer in $C_{i^*} \cup \{i^*\}$, then $i$ receives the item and her payment is $x_i(\theta^\prime) = p_i$;
			\item else if $v_i^\prime = p_{j}$, where $j$ is the $(l+1)$-th ranked buyer in $C_{i^*} \cup \{i^*\}$, then $i$ receives the item and her payment is $x_i(\theta^\prime) = p_i$;
			\item otherwise, $i$ passes the item to buyer $j$ and $i$'s payment is $x_i(\theta^\prime) = p_i - p_j$, where $j$ is the $(l+1)$-th ranked buyer in $C_{i^*} \cup \{i^*\}$. Set $i=j$ and $l= l+1$.
		\end{itemize}
		\item The payments of all the rest buyers are zero.
	\end{enumerate}
\end{framed}


\begin{figure}
	\centering
	\includegraphics[width=2.3in]{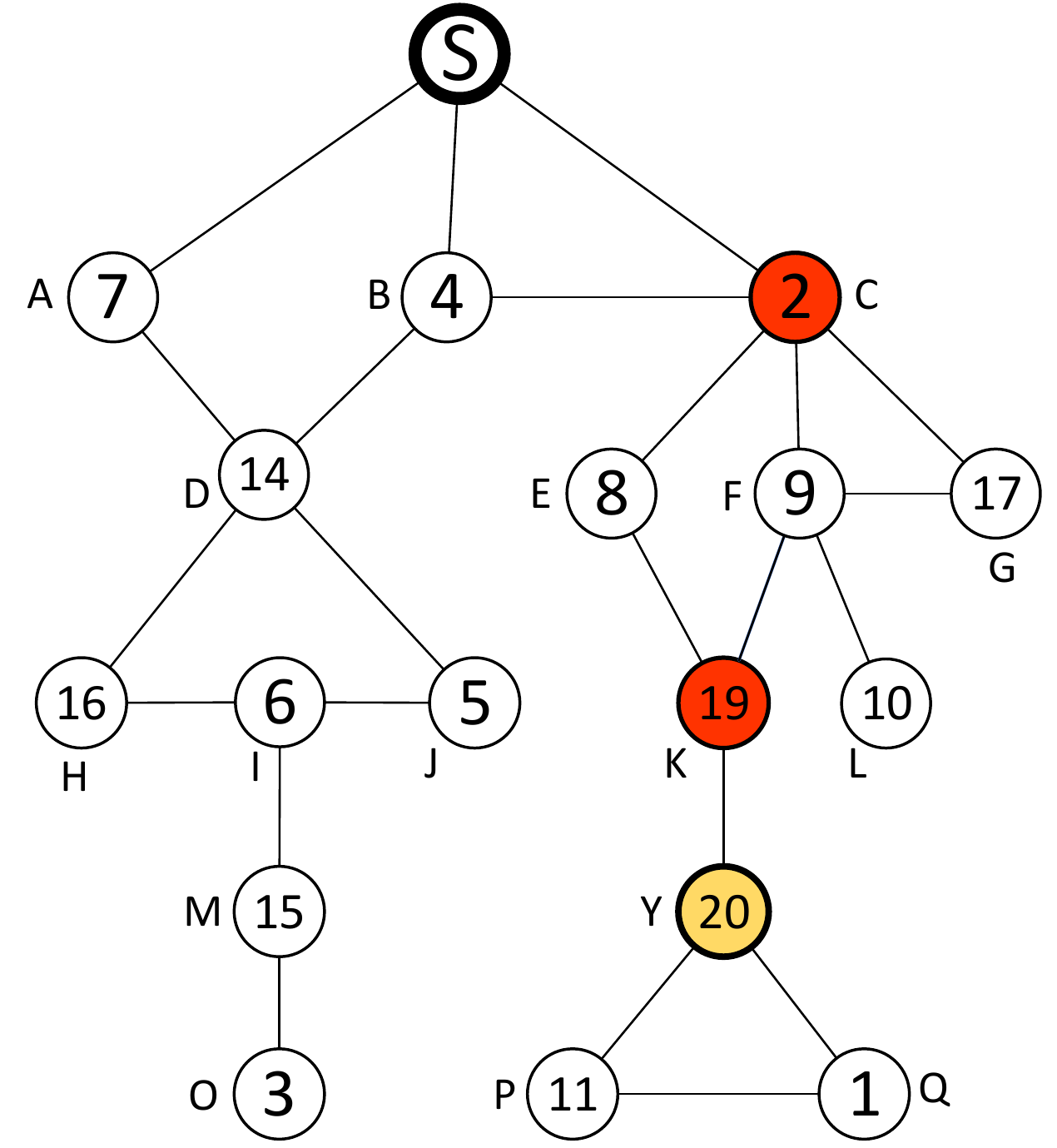}
	\caption{A running example of the information diffusion mechanism, where the seller $s$ is located at the top of the graph and is selling one item, the value in each node is the node's private valuation for receiving the item, and the lines between nodes represent neighbourhood relationship. Node $Y$ is the node with the highest valuation and $C, K$ are $Y$'s diffusion critical buyers.}\label{eg_IDM}
\end{figure}

Figure~\ref{eg_IDM} shows a social network example. 
Without any advertising, the seller can only sell the item among nodes $A$, $B$ and $C$, and her revenue cannot be more than $7$. If $A$, $B$ and $C$ invite their neighbours, these neighbours further invite their neighbours and so on, then all nodes in the social network will be able to join the sale and the seller may receive a revenue as high as the highest valuation of the social network which is $20$.

Let us run IDM on the social network given in Figure~\ref{eg_IDM}. Assume that all buyers report their valuations truthfully and invite all their neighbours, IDM runs as follows:
\begin{itemize}
	\item Step $(1)$ identifies that the buyer with the highest valuation is $Y$, i.e. $i^* = Y$. 
	\item Step $(2)$ computes $C_{i^*} = \{C, K\}$. 
	\item Step $(3)$ gives the order of $C_{i^*}\cup \{i^*\}$ as $C \succ_{i^*} K \succ_{i^*} i^*$. 
	\item Step $(4)$ defines the payments $p_i$ for all nodes in $C_{i^*}\cup \{i^*\}$, which are $p_C = 16$, $p_K = 17$ and $p_Y = 19$, the highest valuation without $C$, $K$ and $Y$'s participation respectively.
	\item Step $(5)$ first gives the item to node $C$; $C$ is not the last ranked buyer in $C_{i^*}\cup \{i^*\}$ and $v_C \neq p_K$, so $C$ passes the item to $K$ and her payment is $p_C - p_K = -1$; $K$ is not the last ranked buyer, but $v_K = p_Y$, therefore $K$ receives the item and pays $p_K$.
	\item All the rest of the buyers, including $Y$, pay nothing.
\end{itemize}

In the above example, IDM allocates the item to node $K$ and $K$ pays $17$, but $s$ does not receive all the payment, and she pays $C$ an amount of $1$ for the advertising. Therefore, the seller receives a revenue of $16$ from IDM, which is more than two times the revenue she can get without any advertising. Note that only buyer $C$ is rewarded for the information propagation as the other buyers are not critical for inviting $K$. Li et al. showed that IDM has the following desirable properties.

\begin{theorem}[Li et al.~\cite{li2017mechanism}]
IDM is incentive compatible (i.e. reporting valuations truthfully and inviting all neighbours is a dominant strategy) and individually rational. The revenue of the seller given by IDM is at least the revenue given by VCG with the seller's neighbours only.
\end{theorem}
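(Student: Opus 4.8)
The plan is to read off the three assertions directly from the structure of Step~5. Write the chain $C_{i^*}\cup\{i^*\}$, ordered by $\succ_{i^*}$, as $c_1\succ_{i^*}\cdots\succ_{i^*}c_k=i^*$. Two elementary observations drive everything. First, the numbers $p_i$ of Step~4 are non-decreasing along this order: if $i\succ_{i^*}j$ then every invitation chain to $j$ passes through $i$, so $N_{-i}\subseteq N_{-j}$ and hence $p_i\le p_j$. Second, $p_i$ depends only on the reports of the buyers $\neq i$, not on $i$'s own report nor on whom $i$ invites, since a buyer that becomes reachable only through $i$ does not belong to $N_{-i}$. For individual rationality, assume $i$ reports $v_i$ truthfully: a buyer outside $C_{i^*}\cup\{i^*\}$ gets nothing and pays $0$; a buyer who passes the item at position $l$ pays $p_{c_l}-p_{c_{l+1}}\le 0$, hence has utility $p_{c_{l+1}}-p_{c_l}\ge 0$; and the buyer who keeps the item pays $p_i$, where either she is $i^*$ and $p_i\le v_{i^*}$ because her report is the global maximum, or the keeping condition gives $v_i=p_{c_{l+1}}\ge p_{c_l}=p_i$, so in both cases her utility $v_i-p_i\ge 0$. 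Step~5 terminates because $i^*$ is last in the order and always keeps, so IR holds.

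\emph{Incentive compatibility --- valuation.} Fix $i$'s invitations and everyone else's action. If, under her true report, $i$ lies outside $C_{i^*}\cup\{i^*\}$, her utility is $0$; reporting lower changes nothing, and reporting above the current top value makes her the new top buyer, who must then pay $\max_{j\in N_{-i}\ \wedge\ \theta_j'\neq nil}v_j'\ge v_{i^*}\ge v_i$, i.e.\ gets non-positive utility. If $i=c_l$ lies in the chain, let $Q$ be the largest report among the buyers in $N_{-c_{l+1}}\setminus\{c_l\}$ (and $Q:=0$ if $c_l=i^*$). One checks that the number $p_{c_{l+1}}$ used by the mechanism equals $\max(Q,v_{c_l}')$, so if the item reaches $c_l$ she keeps it exactly when $v_{c_l}'\ge Q$, then paying the fixed amount $p_i$, and passes otherwise; moreover her own report cannot move the item past a buyer ahead of her. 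Hence her utility is $\max(v_i,Q)-p_i$ when the item reaches her and $0$ otherwise, and in every regime this is maximised at the truthful report (overbidding past the current top merely replaces $\max(v_i,Q)-p_i$ by $v_i-p_i$, which is no larger). Note this is more than a single-parameter argument: a passing buyer collects a reward, so the ``threshold'' $Q$ may strictly exceed the payment $p_i$ without breaking truthfulness.

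\emph{Incentive compatibility --- propagation.} Fix $i$'s truthful value and the others' actions, and compare inviting all of $r_i$ with inviting a proper subset; by induction it suffices to show that adding one more neighbour $d$ does not lower $i$'s utility. Adding $d$ does not change $p_i$ and does not change which buyers precede $i$ in the order, so $i$ retains her index $l$ whenever she stays in the chain and, by the previous paragraph, her utility is $\max(v_i,Q)-p_i$ (or $0$ if she is outside the chain). If no newly reachable buyer exceeds the current top value, the chain of $i^*$, and hence $i$'s utility, is unchanged. If some new buyer $b$ becomes the top buyer, then every chain from $s$ to $b$ runs through $i$, so $i\in C_b$, and the new value $Q$ now ranges over a set that still contains the old top buyer $i^*$, whence $Q_{\mathrm{new}}\ge v_{i^*}\ge\max(v_i,Q_{\mathrm{old}})$; therefore $i$'s new utility $\max(v_i,Q_{\mathrm{new}})-p_i$ is at least her old utility (and at least $0$ when she was previously outside the chain). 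Combining with the valuation part gives that reporting truthfully and inviting everyone is dominant. I expect this propagation step to be the main obstacle: a priori, adding a neighbour can simultaneously change the identity of the top buyer, the critical set, and $i$'s rank, and the argument only closes once one isolates the right invariant (``$p_i$ fixed, the branch runner-up $Q$ monotone in the invitation set'').

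\emph{Revenue.} The payments in Step~5 telescope: each $c_l$ before the keeper pays $p_{c_l}-p_{c_{l+1}}$, the keeper pays its own $p$-value, and everybody else pays $0$, so $R^{(\pi,x)}(\theta')=p_{c_1}$ (this covers the degenerate case $C_{i^*}=\emptyset$, where $c_1=i^*$). Deleting $c_1$ leaves every other neighbour of $s$ still directly reachable from $s$, so $p_{c_1}\ge\max_{j\in r_s\setminus\{c_1\}}v_j'$, which is at least the second-largest report among $r_s$ --- precisely the revenue of the VCG (second-price) auction run among the seller's neighbours only. Hence the seller's revenue is at least that benchmark, completing the proof.
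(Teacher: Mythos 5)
The paper itself does not prove this theorem: it is quoted from Li et al.~\cite{li2017mechanism} as background, so there is no in-paper argument to compare against; judged on its own terms, your reconstruction follows the natural (and essentially the original) route --- the critical diffusion sequence $c_1\succ\cdots\succ c_k=i^*$, the facts that $p_i$ is independent of $i$'s own report and invitations and monotone along the chain, the ``$\max(v_i,Q)-p_i$'' form of a chain member's utility, and the telescoping of payments to $p_{c_1}$. The IR, valuation-IC and revenue parts are correct, modulo one point you should make explicit: a buyer's report or new invitations cannot affect whether the item reaches her, because for every $c_m$ ahead of her the threshold $p_{c_{m+1}}$ is a maximum over $N_{-c_{m+1}}$, which excludes her and everyone reachable only through her (as $c_{m+1}$ is critical for her); this also underlies your claim that $C_{c_l}\cup\{c_l\}$ is exactly the old prefix when she over-reports to become the top.

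The one genuine flaw is in the propagation step, case ``no newly reachable buyer exceeds the current top'': your assertion that the chain of $i^*$, and hence $i$'s utility, is unchanged is false in general. Inviting a new neighbour $d$ can create a bypass path $s\to\cdots\to i\to d\to\cdots\to i^*$ that destroys the criticality of buyers ranked \emph{after} $i$ (e.g.\ the old $c_{l+1}$ drops out when $d$ is adjacent to $i^*$), so the chain, the identity of the next node after $i$, and $Q$ all change even though $i^*$ does not. The argument is repairable with the invariant you yourself name: adding invitations only adds paths, so the set of critical nodes after $i$ can only shrink, the next node $c'_{l+1}$ moves weakly further along the old chain, and reachable sets only grow; hence $N_{-c'_{l+1}}$ grows, $Q$ is monotone non-decreasing, $p_i$ and the prefix decisions are fixed, and the utility $\max(v_i,Q)-p_i$ cannot fall. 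Phrased this way you do not even need the case split on whether the top changes (your case (b) is fine as stated, since the new top is reachable only via the added neighbour $d$, forcing $c'_{l+1}=d$ and $i^*_{\mathrm{old}}\in N_{-d}$), but the unified monotonicity argument is what actually closes the induction.
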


\section{Generalised IDM for Selling Multiple Items}
\label{sect_gidm}
In this section, we present our generalisation of IDM for a seller to sell multiple homogeneous items in a social network. We assume that each buyer requires at most one item. Clearly, we cannot simply run IDM multiple times to solve the problem, as buyers who receive the item earlier would pay more, which is not incentive compatible.

Before we introduce the mechanism, we need some additional concepts.


\begin{definition}
	\label{diffusion critical node}
	Given a feasible action profile $\theta^\prime$ of all buyers, for any two buyers $i\neq j\in N$ such that $\theta_i^\prime, \theta_j^\prime \neq nil$, we say $i$ is $j$'s \emph{critical parent} if without $i$'s participation, there exists no invitation chain from the seller to $j$. If $i$ is $j$'s \emph{critical parent}, then $j$ is $i$'s \emph{critical child}. Let $\mathcal{P}_i(\theta^\prime)$ be the set of all critical parents of $i$ and $\mathcal{C}_i(\theta^\prime)$ be the set of all critical children of $i$ under action profile $\theta^\prime$.
\end{definition}

If a buyer $j \in \mathcal{P}_i(\theta^\prime)$ does not invite any of her neighbours $r_j$, $i$ cannot join the sale, while if $i$ does not invite any of her neighbours $r_i$, $\mathcal{C}_i(\theta^\prime)$ cannot join the sale.

\begin{definition}
	Given a feasible action profile $\theta^\prime \in \mathcal{F}(\theta)$, we define a \emph{partial order $\succ_{\theta^\prime}$} on all $i,j\in N$ such that $i \succ_{\theta^\prime} j$ if and only if $i\in \mathcal{P}_{j}(\theta^\prime)$.
\end{definition}
It is clear that if $i \succ_{\theta^\prime} j$ and $j \succ_{\theta^\prime} l$, then $i \succ_{\theta^\prime} l$. 

Take the example given in Figure~\ref{eg_IDM}, if all buyers act truthfully, i.e. $\theta^\prime = \theta$, for buyer $M$, we have $\mathcal{P}_M(\theta) = \{D,I\}$, $\mathcal{C}_M(\theta)=\{O\}$ and $D \succ_{\theta} I$.
Furthermore, it is easy to check that $\mathcal{P}_A(\theta) = \emptyset$ and $\mathcal{C}_A(\theta)=\emptyset$, as $A$ is directly linked to $s$ and without $A$'s participation, all other buyers can still receive invitations.

Given a feasible action profile $\theta^\prime$, we define an optimal allocation tree based on the efficient allocation and their critical parents.

\begin{definition}
	Given $\theta^\prime \in \mathcal{F}(\theta)$, an \emph{optimal allocation tree of $\theta^\prime$}, denoted by $T^{opt}(\theta^\prime)$, is defined as:
	\begin{itemize}
		\item $T^{opt}(\theta^\prime)$ is rooted at $s$,
		\item all buyers in $\{i\in N| \pi_i^{eff}(\theta^\prime) = 1\}$, denoted by $N^{opt}$, and all their critical parents $\mathcal{P}^{opt} = \cup_{i\in N^{opt}} \mathcal{P}_{i}(\theta^\prime)$ are the nodes of $T^{opt}(\theta^\prime)$, where $\pi^{eff}$ is an efficient allocation,
		\item the path from $s$ to each node $i\in N^{opt} \cup \mathcal{P}^{opt}$ is given by $\mathcal{P}_i(\theta^\prime)$ and the order $\succ_{\theta^\prime}$ in the form of $(s, j_1,\cdots, j_l, j_{l+1}, \cdots, j_m, i)$ where all $j_l, j_{l+1}\in \mathcal{P}_i(\theta^\prime)$ and $j_l \succ_{\theta^\prime} j_{l+1}$. If $\mathcal{P}_i(\theta^\prime) = \emptyset$, then the path is $(s,i)$.
	\end{itemize}
	For each node $i$ in $T^{opt}(\theta^\prime)$, we define a \emph{weight} $w_i(T^{opt}(\theta^\prime)) = |\{j\in N^{opt}| j=i \vee i\in \mathcal{P}_{j}(\theta^\prime)\}|$, which is the total number of items allocated to $i$ and $\mathcal{C}_i(\theta^\prime)$ under the efficient allocation $\pi^{eff}(\theta^\prime)$. Let $Children(i)$ be the set of all direct children of $i$ in $T^{opt}(\theta^\prime)$.
\end{definition}

\begin{figure} 
	\centering
	\includegraphics[width=2.3in]{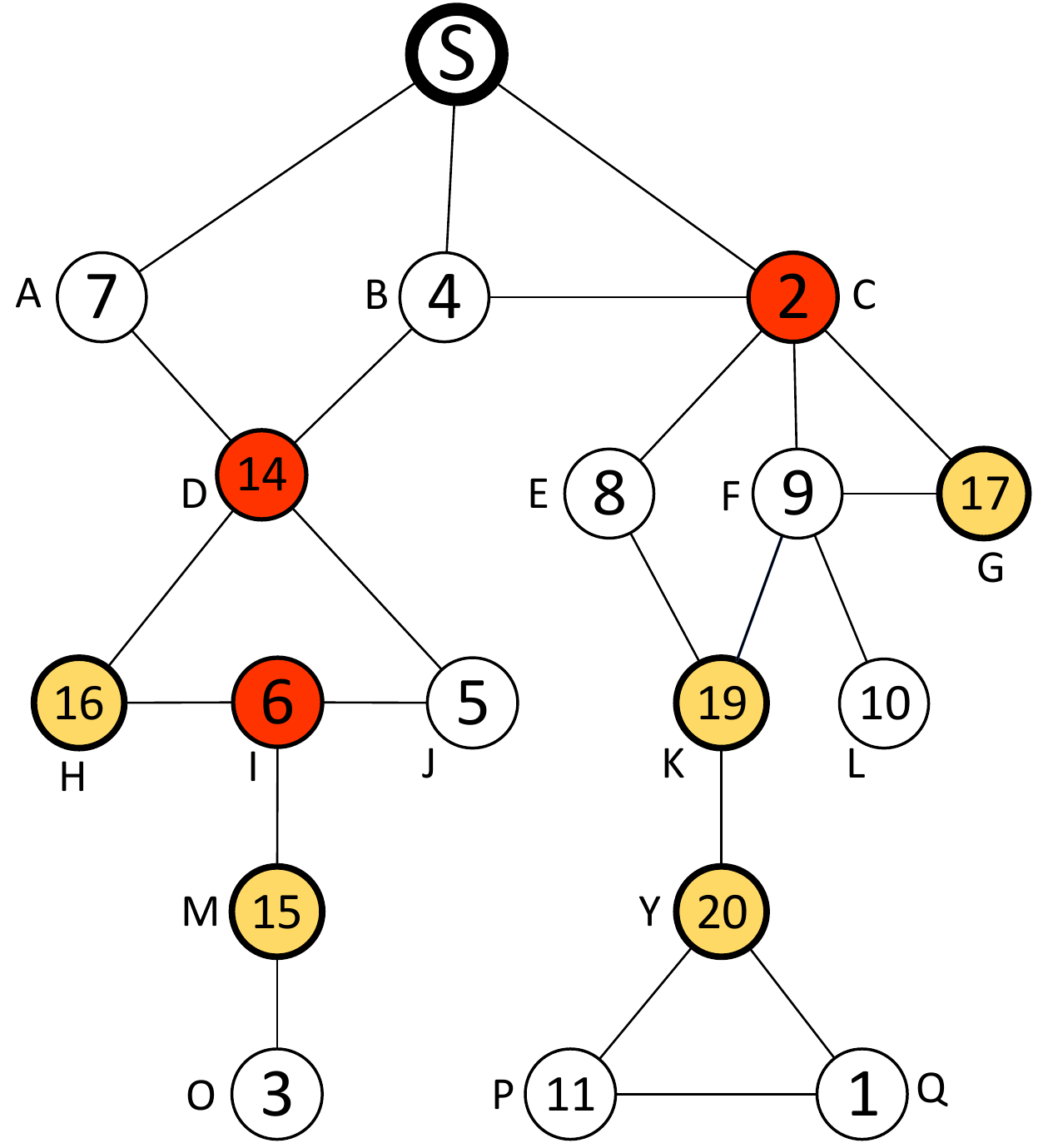}
	\caption{The optimal allocation with five items, where yellow nodes receive items and red nodes are their critical parents.}\label{eg_GIDM}
\end{figure}

\begin{figure}
	\centering
	\includegraphics[width=2.3in]{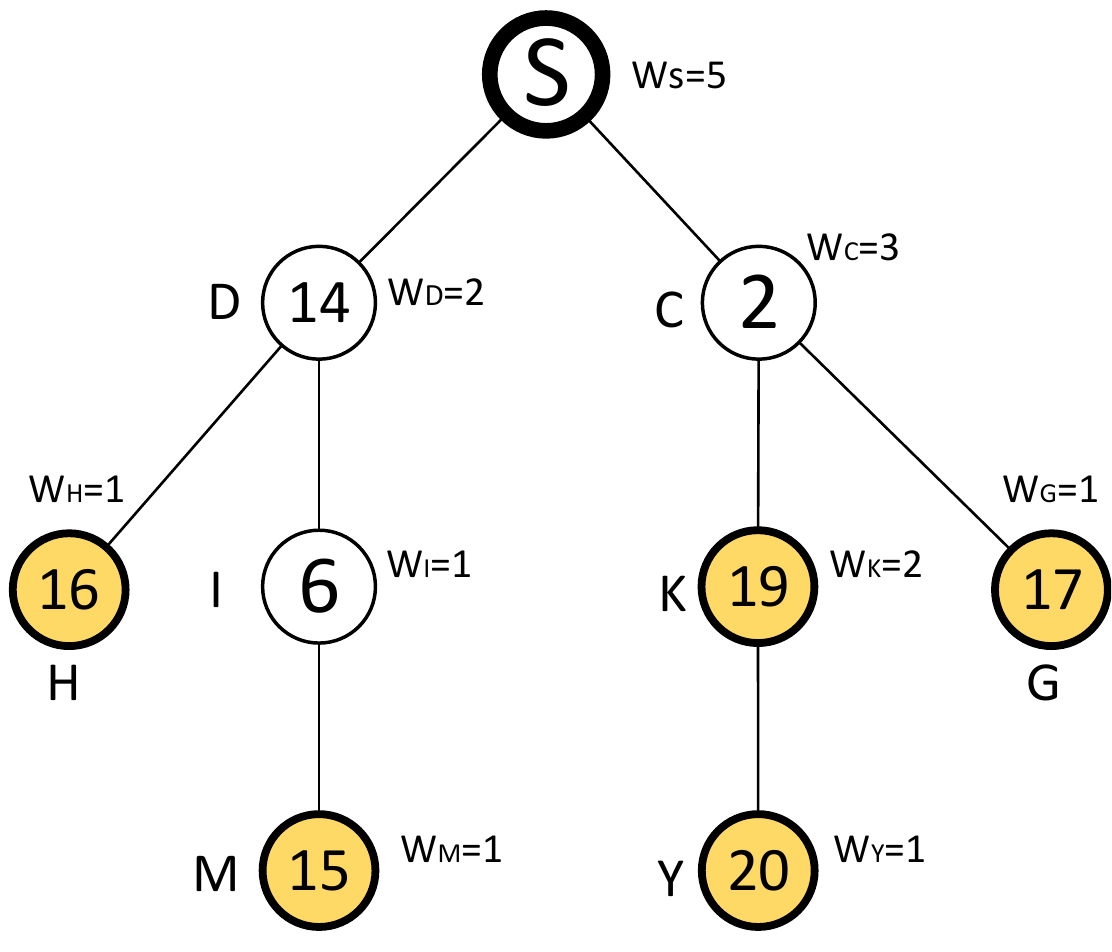}
	\caption{The optimal allocation tree $T^{opt}(\theta)$ of the allocation given in Figure~\ref{eg_GIDM}, where the weight beside each node $i$ indicates the number of items allocated to $i$ and $i$'s critical children $\mathcal{C}_i(\theta)$.}\label{eg_ct}
\end{figure}

Take the example given in Figure~\ref{eg_IDM} again, if $\mathcal{K} = 5$ and all buyers act truthfully, we have the efficient allocation as given in Figure~\ref{eg_GIDM} and its optimal allocation tree $T^{opt}(\theta)$ given in Figure~\ref{eg_ct}.

Now we are ready to define our generalization of IDM.
\begin{framed}
	\noindent\textbf{Generalized Information Diffusion Mechanism (GIDM)}\\
	\rule{\textwidth}{0.5pt}
	Given the buyers' type profile $\theta$ and their action $\theta^\prime\in \mathcal{F}(\theta)$, 
	compute the optimal allocation tree $T^{opt}(\theta^\prime)$. 
	
	
	Let $W$ be the set of buyers who receive an item in GIDM (the winners), initially $W = \emptyset$. For each $i\in W$, we define $GetFrom(i) \in N^{opt}$ to be the buyer from whom the item was taken by $i$ from the efficient allocation. 
	$GetFrom(i) = i$ indicates that $i$ takes the item from herself. 

	\begin{itemize}
		\item \textbf{Allocation}: The allocation is done with a DFS-like procedure. Let $Q$ be a last in first out (LIFO) stack, initially $Q$ is empty. The seller $s$ gives $w_i(T^{opt}(\theta^\prime))$ items to each $i\in Children(s)$ and adds all $Children(s)$  into $Q$. 
		Repeat the allocation process defined in the following until $Q$ is empty.

		\item \textbf{Payment}: For all $i\in N$, her payment is:
		\begin{equation*}
			\begin{cases}
				\mathcal{SW}_{-D_i} - (\mathcal{SW}_{-\mathcal{C}_i^{\mathcal{K}}} - v_i^\prime)	& \text{if $i\in W$,}\\
				\mathcal{SW}_{-D_i} - \mathcal{SW}_{-\mathcal{C}_i^{\mathcal{K}}} 	& \text{if $i\in \displaystyle\bigcup_{j\in W}\mathcal{P}_j(\theta^\prime) \setminus W$,}\\
				0    & \text{otherwise.}\\
			\end{cases}
		\end{equation*}
		where $\mathcal{SW}_{-\mathcal{C}_i^{\mathcal{K}}}$ is defined in the allocation section and $\mathcal{SW}_{-D_i}$ is defined by a feasible allocation $\pi$ as:
		\begin{align*}
			\text{Maximise: }  & \mathcal{SW}_{-D_i} = \sum_{j\in N_{-D_i}} \pi_j(\theta^\prime)v_j^\prime  \\
			\text{Subject to: } 	& N_{-D_i} = N \setminus D_i \\
			& D_i = \{i\} \cup \mathcal{C}_i(\theta^\prime) \\ 
				& \forall {j\in N_i^{received}}, \pi_j(\theta^\prime) = 1 \\
				& N_i^{received} = W \cap \mathcal{P}_i(\theta^\prime) \\
				&{\color{red} \forall {j \in (N^{opt}\setminus D_i ) \setminus N_i^{out}}, \pi_j(\theta^\prime) = 1}\\
				& N_i^{out} = \{j\not\in N_i^{received}| j = GetFrom(l), \forall_{l\in N_i^{received}}\}   
		\end{align*}
	\end{itemize}
\end{framed}

The intuition behind the allocation of GIDM is that if a buyer does not receive an item in the efficient allocation but her critical children receive items, then the buyer may take an item from one of her critical children, but not from any other buyer who is not her critical child. Furthermore, the buyer only takes an item if her valuation is big enough, otherwise passing it to her children gives her a higher utility.

In the definition of GIDM, $N_i^{received}$ is the set of $i$'s critical parents who have already received an item before items are passed to $i$. $N_i^{out}$ is the set of buyers who receive an item in the efficient/optimal allocation, but receive no items under GIDM as their items have been taken by their critical parents in $N_i^{received}$.

Buyer $i$ can take an item in GIDM if $i$ receives an item in the social-welfare-maximising allocation when buyers from $\mathcal{C}_i^{\mathcal{K}}$ do not participate, $i$'s critical parents who have received items, i.e. $N_i^{received}$, still receive items, and {\color{red}all buyers in $(N^{opt} \setminus \mathcal{C}_i^{\mathcal{K}} )\setminus N_i^{out}$ except for $i$ still receive items}. 

\begin{framed}
	\noindent\textbf{The Allocation of GIDM}\\
	\rule{\textwidth}{0.5pt}
	\begin{enumerate}
			\item Remove a node $i$ from $Q$, add $i$ to $W$ if $i$ receives an item in the following feasible allocation $\pi$:
			\begin{align*}
				\text{Maximise: }  & \mathcal{SW}_{-\mathcal{C}_i^{\mathcal{K}}} = \sum_{j\in N_{-\mathcal{C}_i^{\mathcal{K}}}} \pi_j(\theta^\prime)v_j^\prime  \\
				\text{Subject to: } 	& N_{-\mathcal{C}_i^{\mathcal{K}}} = N \setminus \mathcal{C}_i^{\mathcal{K}}  \\
				& \mathcal{C}_i^\mathcal{K} = \mathcal{C}_i(\theta^\prime)^\mathcal{K} \cup \mathcal{P}(\mathcal{C}_i(\theta^\prime)^\mathcal{K}) \cup \mathcal{C}(\mathcal{P}(\mathcal{C}_i(\theta^\prime)^\mathcal{K})) \\
				& \mathcal{P}(\mathcal{C}_i(\theta^\prime)^\mathcal{K}) = \displaystyle\bigcup_{j\in \mathcal{C}_i(\theta^\prime)^\mathcal{K}} \{l| l\in \mathcal{P}_j(\theta^\prime) \wedge i \succ_{\theta^\prime} l \}  \\ 
				& \mathcal{C}(\mathcal{P}(\mathcal{C}_i(\theta^\prime)^\mathcal{K})) = \bigcup_{j\in \mathcal{P}(\mathcal{C}_i(\theta^\prime)^\mathcal{K})} \mathcal{C}_j(\theta^\prime) \\
				& \forall {j\in N_i^{received}}, \pi_j(\theta^\prime) = 1 \\
				& N_i^{received} = W \cap \mathcal{P}_i(\theta^\prime) \\
				& {\color{red} \forall {j\neq i\in (N^{opt} \setminus \mathcal{C}_i^{\mathcal{K}})\setminus N_i^{out}}, \pi_j(\theta^\prime) = 1}\\
				& N_i^{out} = \{j\not\in N_i^{received}| j = GetFrom(l), \forall_{l\in N_i^{received}}\} 
			\end{align*}
			where $\mathcal{C}_i(\theta^\prime)^{\mathcal{K}}$ is the set of top $\mathcal{K}$ ranked critical children of $i$ according to their reported valuation (from high to low). If $|\mathcal{C}_i(\theta^\prime)| < \mathcal{K}$, then $\mathcal{C}_i^{\mathcal{K}} = \mathcal{C}_i(\theta^\prime)^{\mathcal{K}} = \mathcal{C}_i(\theta^\prime)$.
			\item If $i\in W$:
							\begin{itemize}
							\item if $\sum_{j \in Children(i)} w_j(T^{opt}(\theta^\prime)) = w_i(T^{opt}(\theta^\prime)) - 1$,  set $GetFrom(i) = i$,
							\item otherwise, let $k_i = w_i(T^{opt}(\theta^\prime))$, and $out$ be the buyer with the $k_i$-th largest valuation report in the subtree (of $T^{opt}(\theta^\prime)$) rooted at $i$ and $w_{out}(T^{opt}(\theta^\prime)) \neq 0$,
			for all $j \in \mathcal{P}_{out}(\theta^\prime) \cup \{out\}$ if $i \succ_{\theta^\prime} j$, set $w_j(T^{opt}(\theta^\prime)) = w_j(T^{opt}(\theta^\prime)) - 1$, and set $GetFrom(i) = out$.
							\end{itemize}			
			\item For each child $j$ of $i$, if $w_j(T^{opt}(\theta^\prime)) > 0$, give $w_j(T^{opt}(\theta^\prime))$ items to $j$ and add $j$ into $Q$.
		\end{enumerate}
\end{framed}


\begin{figure}
	\centering
	\includegraphics[width=3.5in]{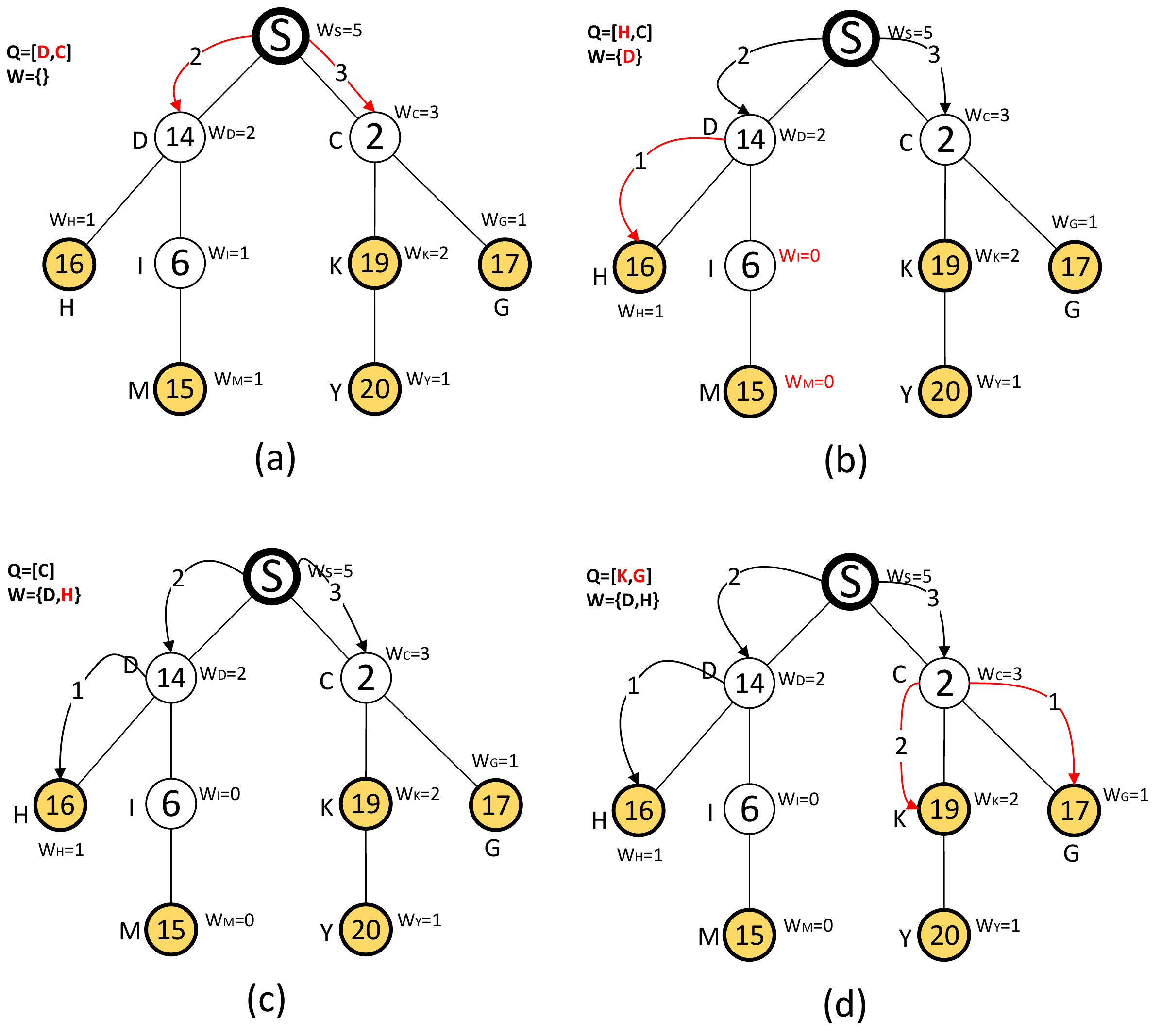}
	\caption{A running example of GIDM}\label{eg_dp}
\end{figure}

\begin{figure}
	\centering
	\includegraphics[width=3.5in]{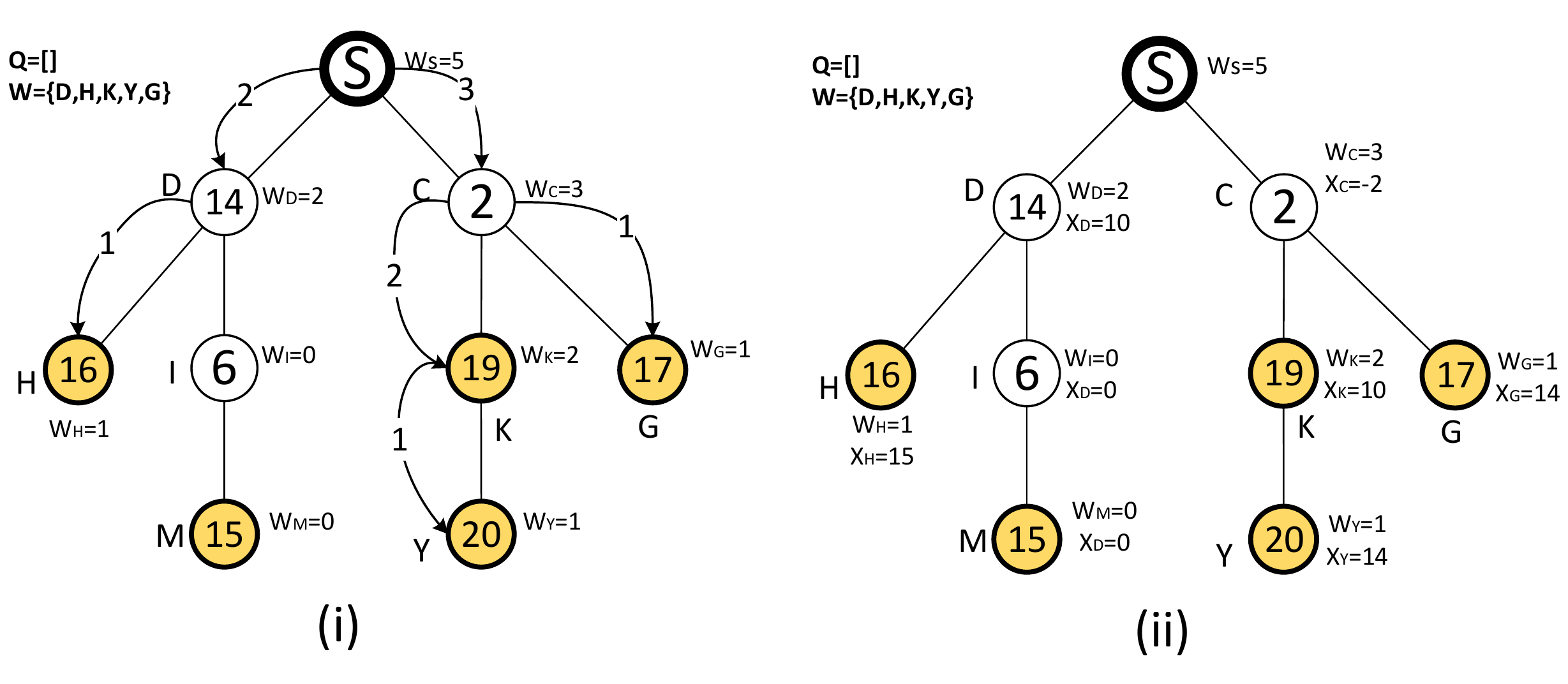}
	\caption{The outcomes of GIDM}\label{eg_res}
\end{figure}

Let $(\pi^{GIDM}, x^{GIDM})$ be the allocation policy and payment policy of the GIDM. It is easy to verify that GIDM is the same as IDM when $\mathcal{K} = 1$. Consider the social network given in Figure~\ref{eg_GIDM} with $\mathcal{K} = 5$, the GIDM runs as follows.
\begin{itemize}
	\item Firstly it computes the optimal allocation tree $T^{opt}(\theta^\prime)$ which is depicted in Figure~\ref{eg_ct}, and initializes the weight $w_i$ of each buyer in the tree according to $w_i(T^{opt}(\theta^\prime))$. This weight may be updated in the following process.
	\item Then compute the allocation of GIDM which is partially shown in Figure~\ref{eg_dp}:
	\begin{enumerate}
		\item Firstly, the seller gives $2$ items to $D$ and $3$ items to $C$, and the stack $Q$ contains $C,D$ and the winner set $W$ is empty (shown in Figure~\ref{eg_dp}$(a)$). 
		\item Then buyer $D$ is popped out of $Q$ and is identified as a winner and added in $W$ (shown in Figure~\ref{eg_dp}$(b)$). Here, $D$'s critical children $\mathcal{C}_D(\theta^\prime) = \{H,I,J,M,O\}$, and $\mathcal{C}_D^\mathcal{K} = \mathcal{C}_D(\theta^\prime)^\mathcal{K} = \{H,I,J,M,O\}$. Note that $\mathcal{C}_i(\theta^\prime)^\mathcal{K}$ and $\mathcal{C}_i^\mathcal{K}$ are normally not the same if the graph becomes complex. \\ $N_D^{received} = N_D^{out} = \emptyset$. If we remove $\mathcal{C}_D^\mathcal{K}$ from $N$, then $D$ will receive an item under $\mathcal{SW}_{-\mathcal{C}_D^{\mathcal{K}}}$, so add $D$ to $W$.
		\item In step $(2)$ of the allocation process, as $D$ did not receive an item in the optimal allocation, we have to remove one item initially allocated to $D$'s critical children by the optimal allocation. We choose the child with the lowest valuation among all $D$'s critical children who still have items, which is buyer $M$, and update the weights of $M$ and $M$'s critical parents to $D$ (as shown in Figure~\ref{eg_dp}$(b)$). Intuitively, $D$ takes the item from $M$, so we set $GetFrom(D) = M$.
		\item In step $(3)$ of the allocation process, $D$ gives $H$ one item and adds $H$ into $Q$. 
		\item Then in the next iteration, buyer $H$ is popped out and is added into $W$ (as shown in Figure~\ref{eg_dp}$(c)$). Here $\mathcal{C}_H^\mathcal{K} = \emptyset$, $N_H^{received} = \{D\}$ and $N_H^{out} = \{M\}$.
		\item {Then $C$ is popped out and is not identified as a winner. $C$ just gives $2$ items to $K$ and $1$ item to $G$ and adds $K,G$ into $Q$ (as shown in Figure~\ref{eg_dp}$(d)$). It is worth mentioning that $C$'s critical children $\mathcal{C}_C(\theta^\prime) = \{E,F,G,K,L,Y,P,Q\}$, $\mathcal{C}_C(\theta^\prime)^\mathcal{K} = \{G,K,L,Y,P\}$, $\mathcal{P}(\mathcal{C}_C(\theta^\prime)^\mathcal{K}) = \{F,K,Y\}$, and $\mathcal{C}(\mathcal{P}(\mathcal{C}_C(\theta^\prime)^\mathcal{K})) = \{L,Y,P,Q\}$. $N_C^{received} = N_C^{out} = \emptyset$. Thus, $\mathcal{C}_C^\mathcal{K}=\{F,G,K,L,Y,P,Q\}$. If we remove $\mathcal{C}_C^\mathcal{K}$ from $N$, $\mathcal{SW}_{-\mathcal{C}_C^{\mathcal{K}}}$ will allocate $\mathcal{K}$ items to $\{H,M,D,A,E\}$, and therefore $C$ cannot win.}
		\item Keep checking for the rest of the buyers of $K,G,Y$, we will end up with the allocation given in Figure~\ref{eg_res}$(i)$.
	\end{enumerate} 
	\item Lastly, compute the payments for all buyers according to the allocation. The corresponding payments of the traders in $T^{opt}(\theta^\prime)$ are given in Figure~\ref{eg_res}$(ii)$. All the other buyers who are not in $W$ receive no items and pay zero.
\end{itemize}

\section{Properties of GIDM}
\label{sect_properties}
In this section, we prove that our generalized information diffusion mechanism is individually rational, incentive compatible and improves the seller's revenue, compared with the revenue the seller can get without any advertising. Therefore, the seller is incentivized to apply our mechanism.

\begin{theorem}
	The generalized information diffusion mechanism is individually rational.
\end{theorem}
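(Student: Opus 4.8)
The plan is to verify $u_i(\theta_i,\theta^\prime)\ge 0$ for every $\theta\in\Theta$, every $i\in N$, and every feasible $\theta^\prime$ with $\theta_i^\prime=(v_i,r_i^\prime)$, by splitting on the three branches of the GIDM payment rule; since $W$, $\mathcal{C}_i^{\mathcal{K}}$, $D_i$, $N^{opt}$ and the two constrained welfare programs are all defined relative to the realised $\theta^\prime$, the argument is uniform over $\theta^\prime$ and over the choice $r_i^\prime\subseteq r_i$. If $i$ falls in the ``otherwise'' branch then $i\notin W$, so $\pi_i^{GIDM}(\theta^\prime)=0$ and $x_i^{GIDM}(\theta^\prime)=0$, hence $u_i=0$. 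In the other two branches I would substitute the payment formula, use $v_i^\prime=v_i$, and use the fact read off the allocation step that whenever $i\in W$ the allocation realising $\mathcal{SW}_{-\mathcal{C}_i^{\mathcal{K}}}$ allocates to $i$ (that is exactly the test for adding $i$ to $W$), so that $\mathcal{SW}_{-\mathcal{C}_i^{\mathcal{K}}}-v_i^\prime$ is the welfare of the other served buyers there. A one-line computation then collapses both the ``$i\in W$'' case and the ``$i\in\bigcup_{j\in W}\mathcal{P}_j(\theta^\prime)\setminus W$'' case to the single inequality
\begin{equation*}
	\mathcal{SW}_{-\mathcal{C}_i^{\mathcal{K}}}(\theta^\prime)\ \ge\ \mathcal{SW}_{-D_i}(\theta^\prime),
\end{equation*}
so it remains to prove this whenever $i$ is a winner or a critical parent of a winner.

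First I would record two structural facts. (i) $\{i\}\cup\mathcal{C}_i^{\mathcal{K}}\subseteq D_i=\{i\}\cup\mathcal{C}_i(\theta^\prime)$: any $l\in\mathcal{P}(\mathcal{C}_i(\theta^\prime)^{\mathcal{K}})$ satisfies $i\succ_{\theta^\prime}l$ by definition, hence $i\in\mathcal{P}_l(\theta^\prime)$, i.e.\ $l\in\mathcal{C}_i(\theta^\prime)$; and any node in $\mathcal{C}(\mathcal{P}(\mathcal{C}_i(\theta^\prime)^{\mathcal{K}}))$ is a critical child of such an $l$, so by transitivity of $\succ_{\theta^\prime}$ it is a critical child of $i$. (ii) $N^{opt}\cap\mathcal{C}_i(\theta^\prime)\subseteq\mathcal{C}_i(\theta^\prime)^{\mathcal{K}}$: because the items are homogeneous and each buyer demands at most one unit, an efficient allocation simply serves the (at most) $\mathcal{K}$ highest reported valuations among the participants, so a critical child of $i$ lying in $N^{opt}$ is among the $\mathcal{K}$ highest-valued critical children of $i$ (fixing one global tie-break). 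Combining (i) and (ii) gives $(N^{opt}\setminus\mathcal{C}_i^{\mathcal{K}})\setminus\{i\}=N^{opt}\setminus D_i$; moreover $N_i^{received}=W\cap\mathcal{P}_i(\theta^\prime)$ and $N_i^{out}$ depend only on data fixed before $i$ is popped from $Q$, so they take the same value in the $\mathcal{SW}_{-D_i}$ program and in the $\mathcal{SW}_{-\mathcal{C}_i^{\mathcal{K}}}$ program. Hence the two programs carry identical forcing constraints and differ only in that $\{i\}\cup(\mathcal{C}_i(\theta^\prime)\setminus\mathcal{C}_i^{\mathcal{K}})$ is available in the latter but not the former.

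The inequality then follows by an inclusion of feasible regions. Let $\pi^*$ attain $\mathcal{SW}_{-D_i}(\theta^\prime)$. Its support lies in $N\setminus D_i\subseteq N\setminus\mathcal{C}_i^{\mathcal{K}}$; it allocates at most $\mathcal{K}$ units; each buyer it serves is a participant and not a critical child of $i$, hence is feasible to serve in the $\mathcal{SW}_{-\mathcal{C}_i^{\mathcal{K}}}$ program as well (indeed it has an invitation chain under $\theta^\prime$ avoiding $D_i$, since every chain visiting a node of $\mathcal{C}_i(\theta^\prime)$ must pass through $i$, so a chain avoiding $i$ avoids all of $\mathcal{C}_i(\theta^\prime)$); it satisfies the $N_i^{received}$ constraint; and by the previous paragraph it satisfies the $N^{opt}$-forcing constraint of the $\mathcal{SW}_{-\mathcal{C}_i^{\mathcal{K}}}$ program. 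So $\pi^*$ is feasible for that program, and since its objective ranges over $N\setminus\mathcal{C}_i^{\mathcal{K}}\supseteq N\setminus D_i$ while $\pi^*$ vanishes off $N\setminus D_i$, we get $\mathcal{SW}_{-\mathcal{C}_i^{\mathcal{K}}}(\theta^\prime)\ge\sum_j\pi^*_j(\theta^\prime)v_j^\prime=\mathcal{SW}_{-D_i}(\theta^\prime)$, which closes the proof.

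The main obstacle I anticipate is fact (ii) and its corollary that the two constrained welfare programs share the same forcing constraints: one must check that the ``cluster'' $\mathcal{C}_i^{\mathcal{K}}$ — which bundles the top-$\mathcal{K}$ critical children of $i$ with certain critical parents strictly below $i$ and their critical children — never discards a node of $N^{opt}$ that $D_i$ retains, and that $N_i^{received}$ and $N_i^{out}$ are genuinely program-independent. Making this airtight also forces the tie-breaking used in every ``top-$\mathcal{K}$'' selection throughout the mechanism to be one globally consistent rule; everything else is bookkeeping with the partial order $\succ_{\theta^\prime}$ and the definition of feasible allocations.
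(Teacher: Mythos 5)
Your proposal is correct and takes essentially the same route as the paper: split on the three payment branches, reduce the two non-trivial branches (with $v_i'=v_i$) to the single inequality $\mathcal{SW}_{-\mathcal{C}_i^{\mathcal{K}}} \geq \mathcal{SW}_{-D_i}$, and obtain it because the $\mathcal{C}_i^{\mathcal{K}}$-program is a relaxation of the $D_i$-program since $\mathcal{C}_i^{\mathcal{K}} \subseteq D_i$. The only difference is one of rigor: you explicitly check that the forcing constraints (via $N^{opt}$, $N_i^{received}$, $N_i^{out}$) align so that an optimizer of the $D_i$-program is feasible for the $\mathcal{C}_i^{\mathcal{K}}$-program, a point the paper's proof compresses into the remark that $N_{-\mathcal{C}_i^{\mathcal{K}}} \supset N_{-D_i}$.
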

\begin{proof}
	Given the buyers' type profile $\theta$ and their action profile $\theta^\prime \in \mathcal{F}(\theta)$,
	to prove GIDM is individually rational, we need to show that for all $i\in N$, $u_i(\theta_i, \theta^\prime, (\pi^{GIDM}, x^{GIDM})) \geq 0$ for all $\theta_i^\prime = (v_i, r_i^\prime)$. 
	
	From the definition of GIDM, for any buyer $i\in N$, we have either $u_i(\theta_i, \theta^\prime) = 0$ or $u_i(\theta_i, \theta^\prime) = \mathcal{SW}_{-\mathcal{C}_i^{\mathcal{K}}} - \mathcal{SW}_{-D_i}$. According to the definitions of $\mathcal{SW}_{-\mathcal{C}_i^{\mathcal{K}}}$ and $\mathcal{SW}_{-D_i}$, we have $\mathcal{SW}_{-\mathcal{C}_i^{\mathcal{K}}} \geq \mathcal{SW}_{-D_i}$, because $N_{-\mathcal{C}_i^{\mathcal{K}}} \supset N_{-D_i}$ and the optimization under $N_{-\mathcal{C}_i^{\mathcal{K}}}$ cannot be worse than that under $N_{-D_i}$. Therefore, we have $u_i(\theta_i, \theta^\prime) \geq 0$.
\end{proof}

\begin{theorem}
	The generalized information diffusion mechanism is incentive compatible, i.e. reporting valuation truthfully and inviting all neighbours is a dominant strategy for all buyers who are aware of the sale.
\end{theorem}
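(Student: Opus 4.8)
The plan is to follow the two-part template of the IDM analysis recalled in Section~\ref{sect_idm} --- establish truthful valuation first, full invitation second --- but carried out through the welfare quantities $\mathcal{SW}_{-\mathcal{C}_i^{\mathcal{K}}}$ and $\mathcal{SW}_{-D_i}$ that define GIDM's payments. The first move is a reformulation of utility: substituting the payment rule into $u_i(\theta_i,\theta^\prime)=\pi_i^{GIDM}(\theta^\prime)v_i-x_i^{GIDM}(\theta^\prime)$ gives $u_i(\theta_i,\theta^\prime)=0$ when $i$ is neither a winner nor a critical parent of a winner, and otherwise
\begin{equation*}
u_i(\theta_i,\theta^\prime)=\pi_i^{GIDM}(\theta^\prime)\,(v_i-v_i^\prime)+\bigl(\mathcal{SW}_{-\mathcal{C}_i^{\mathcal{K}}}-\mathcal{SW}_{-D_i}\bigr),
\end{equation*}
where, by the argument already used for individual rationality ($N_{-\mathcal{C}_i^{\mathcal{K}}}\supset N_{-D_i}$), the bracketed term is non-negative when $v_i^\prime=v_i$. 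Hence a truthful value report zeroes the first term, and it remains to show that no deviation $(v_i^\prime,r_i^\prime)$ with $r_i^\prime\subseteq r_i$ makes the right-hand side exceed its value at $(v_i,r_i)$.

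I would then split an arbitrary deviation into a \emph{misreport step} --- change only $v_i^\prime$, holding $r_i^\prime$ and $\theta_{-i}^\prime$ fixed --- followed by a \emph{withheld-invitation step} --- change only $r_i^\prime$, holding $v_i^\prime=v_i$ and $\theta_{-i}^\prime$ fixed --- and chain the resulting inequalities. For the misreport step, with $r_i^\prime$ and $\theta_{-i}^\prime$ fixed, the key facts to establish are: (i) the sets $D_i=\{i\}\cup\mathcal{C}_i(\theta^\prime)$ and $\mathcal{C}_i^{\mathcal{K}}$ are structural and do not depend on $v_i^\prime$, so $\mathcal{SW}_{-D_i}$ is constant; and (ii) $\mathcal{SW}_{-\mathcal{C}_i^{\mathcal{K}}}$, viewed as a function of $v_i^\prime$ --- which is exactly the objective coefficient of the free $\{0,1\}$ variable $\pi_i$ in the programme defining it --- is convex, non-decreasing and piecewise linear with subgradient equal to $i$'s allocation at that report. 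Convexity then yields $\mathcal{SW}_{-\mathcal{C}_i^{\mathcal{K}}}(v_i^\prime)-\mathcal{SW}_{-\mathcal{C}_i^{\mathcal{K}}}(v_i)\le\pi_i^{GIDM}(v_i^\prime)\,(v_i^\prime-v_i)$, and feeding this into the reformulation gives $u_i(\theta_i,((v_i^\prime,r_i^\prime),\theta_{-i}^\prime))\le u_i(\theta_i,((v_i,r_i^\prime),\theta_{-i}^\prime))$; equivalently, one can argue directly by cases, observing that overbidding into $W$ yields utility $v_i$ minus a threshold that is at least $v_i$, while underbidding out of $W$ forfeits a non-negative surplus, all thresholds being independent of $v_i^\prime$.

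For the withheld-invitation step, fix $v_i^\prime=v_i$ and compare $r_i^\prime=r_i$ with any $r_i^\prime\subseteq r_i$, $r_i^\prime\neq r_i$. Withholding an invitation removes from the instance exactly those buyers every invitation chain to whom used a withdrawn edge, and one checks that each such buyer is a critical child of $i$; hence $N\setminus D_i$, and therefore $\mathcal{SW}_{-D_i}$, is unchanged, while $\mathcal{C}_i^{\mathcal{K}}$ can only lose members and --- just as the highest bidder may become unreachable in IDM --- a high-value winner that the full invitation would have placed below $i$, earning $i$ a reward, may vanish. Formalising that $i$'s reward $\mathcal{SW}_{-\mathcal{C}_i^{\mathcal{K}}}-\mathcal{SW}_{-D_i}$ is monotone in the welfare reachable strictly below $i$ --- using the tree weights $w_i(T^{opt}(\theta^\prime))$, the order $\succ_{\theta^\prime}$, and the construction of $\mathcal{C}_i^{\mathcal{K}}$ --- then gives $u_i(\theta_i,((v_i,r_i^\prime),\theta_{-i}^\prime))\le u_i(\theta_i,((v_i,r_i),\theta_{-i}^\prime))$. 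Chaining with the misreport step gives $u_i(\theta_i,((v_i^\prime,r_i^\prime),\theta_{-i}^\prime))\le u_i(\theta_i,((v_i,r_i),\theta_{-i}^\prime))$, which is the IC inequality; the matching condition on $\theta^{\prime\prime}$ in the definition of IC is precisely what licenses holding $\theta_{-i}^\prime$ fixed across both comparisons, since the buyers still reachable after $i$'s deviation act identically.

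The main obstacle is the misreport step, namely discharging the idealisations behind ``$\mathcal{SW}_{-D_i}$ is constant in $v_i^\prime$'' and ``$\mathcal{SW}_{-\mathcal{C}_i^{\mathcal{K}}}$ is convex in $v_i^\prime$''. Those programmes also carry forced-allocation constraints built from $N^{opt}$, $N_i^{received}$ and $N_i^{out}$, which are outputs of the DFS allocation procedure and are not literally independent of $v_i^\prime$: raising $v_i^\prime$ can evict a lower bidder from $N^{opt}$, and lowering $v_i^\prime$ can release an item that lets a critical ancestor of $i$ become a winner, thereby diverting an item away from $i$'s subtree. The work is to show that every such perturbation either leaves the feasible region of $i$'s programme unchanged or shifts it only in a direction that cannot raise $u_i$, so that the convexity estimate (or the case analysis) still goes through. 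This bookkeeping --- tracking how a unilateral change in $v_i^\prime$ cascades through the DFS and the nested welfare programmes --- is where the real length of the proof lies; by contrast the invitation step is essentially the IDM argument rerun with ``highest reachable bidder'' replaced by ``welfare reachable below $i$''.
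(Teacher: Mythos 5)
Your plan is, at its core, the paper's own argument re-expressed: the same utility quantities ($\pi_i^{GIDM}(\theta^\prime)(v_i-v_i^\prime)$ plus $\mathcal{SW}_{-\mathcal{C}_i^{\mathcal{K}}}-\mathcal{SW}_{-D_i}$), a VCG-style ``you cannot beat an optimum you do not control'' step for the value report, and a monotonicity step for invitations (inviting more neighbours enlarges the set of participating buyers in $N_{-\mathcal{C}_i^{\mathcal{K}}}$, so $\mathcal{SW}_{-\mathcal{C}_i^{\mathcal{K}}}$ is maximised at $r_i^\prime=r_i$). The paper organises this as a case analysis over three groups --- the winners $W$, the buyers in $\bigcup_{j\in W}\mathcal{P}_j(\theta^\prime)\setminus W$, and everyone else --- and within each group checks deviations in $v_i^\prime$ with $r_i^\prime$ fixed (including deviations that move $i$ into another group) and deviations in $r_i^\prime$ with $v_i^\prime=v_i$; your misreport-step/invitation-step chaining is the same decomposition in different clothes, and your closing remark about the $\theta^{\prime\prime}$ clause in the IC definition matches how the paper holds the reachable buyers' actions fixed.

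The genuine problem is that you never discharge the step you yourself single out as ``where the real length of the proof lies.'' You correctly observe that the constraint data of the two programmes --- $N^{opt}$, $N_i^{received}=W\cap\mathcal{P}_i(\theta^\prime)$, $N_i^{out}$, and, through each ancestor $j$'s set $\mathcal{C}_j(\theta^\prime)^{\mathcal{K}}$, even which critical parents of $i$ have won before $i$ is processed --- are not literally independent of $i$'s report, and you then leave as an unproved assertion that every such perturbation ``cannot raise $u_i$.'' That assertion is precisely the content of the theorem at this step: without it, neither the convexity/subgradient estimate (which presumes a fixed feasible region) nor the threshold case analysis is justified, so what you have is a proof plan rather than a proof. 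The paper disposes of this point bluntly rather than by bookkeeping: it fixes $r_i^\prime$, notes that $\mathcal{C}_i^{\mathcal{K}}$ is then fixed, asserts that $\mathcal{SW}_{-\mathcal{C}_i^{\mathcal{K}}}$ is an optimum under constraints $i$ cannot influence and that $\mathcal{SW}_{-D_i}$ is independent of $i$, and gets a one-line contradiction for any profitable misreport --- i.e.\ it asserts exactly the independence you flag as needing verification. So to complete your route you must either prove that independence (or, failing it, that any change a unilateral deviation induces in $N_i^{received}$, $N_i^{out}$ or the forced winners can only decrease $\mathcal{SW}_{-\mathcal{C}_i^{\mathcal{K}}}-\mathcal{SW}_{-D_i}$ relative to its truthful value), or state it explicitly as a lemma as the paper implicitly does. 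The same caveat applies to your invitation step: your claim that $\mathcal{SW}_{-D_i}$ is unchanged when invitations are withheld is only immediate for the objective (the removed buyers lie in $D_i$), not for the forced-allocation constraints, and the paper's own invitation argument likewise only tracks the monotone growth of $\mathcal{SW}_{-\mathcal{C}_i^{\mathcal{K}}}$.
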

\begin{proof}
	Given the buyers' type profile $\theta$ and their action profile $\theta^\prime \in \mathcal{F}(\theta)$, to prove GIDM is incentive compatible, we need to show that for each buyer $i\in N$ such that $\theta_i^\prime \neq nil$:
	\begin{itemize}
		\item fix $i$'s invitation to be $r_i^\prime$, reporting $v_i$ truthfully maximise $i$'s utility.
		\item fix $i$'s valuation report to be $v_i$, inviting all $i$'s neighbours $r_i$ maximise $i$'s utility.
	\end{itemize}
	
	We will prove the above for buyers in three different groups:
	\begin{enumerate}
		\item all buyers who receive one item, i.e. $W$.
		\item all buyers who are not in $W$, but are critical parents of $W$, i.e. $\bigcup_{i\in W}\mathcal{P}_i(\theta^\prime) \setminus W$.
		\item all buyers who are not in the first two groups.
	\end{enumerate}
	
	\textbf{Group $(1)$}: for each $i\in W$, her utility is \\
	$u_i(\theta_i, \theta^\prime) = v_i + (\mathcal{SW}_{-\mathcal{C}_i^{\mathcal{K}}} - v_i^\prime) - \mathcal{SW}_{-D_i}$.
	\begin{itemize}
		\item Fix $i$'s invitation to be $r_i^\prime$, then $\mathcal{C}_i^{\mathcal{K}}$ is fixed. 
		\begin{itemize}
			\item If $i$ reports $v_i$ truthfully, i.e. $v_i^\prime = v_i$, then $u_i(\theta_i, \theta^\prime) = \mathcal{SW}_{-\mathcal{C}_i^{\mathcal{K}}} - \mathcal{SW}_{-D_i}$. Since $\mathcal{SW}_{-\mathcal{C}_i^{\mathcal{K}}}$ is the optimal social welfare under the constraints that $i$ cannot influence, if $i$ can misreport $v_i^\prime$ to change the allocation to increase $v_i + (\mathcal{SW}_{-\mathcal{C}_i^{\mathcal{K}}} - v_i^\prime)$, then it contradicts that $\mathcal{SW}_{-\mathcal{C}_i^{\mathcal{K}}}$ is the optimal social welfare. Furthermore, $\mathcal{SW}_{-D_i}$ is independent of $i$ and we have $\mathcal{SW}_{-\mathcal{C}_i^{\mathcal{K}}} \geq \mathcal{SW}_{-D_i}$ as $N_{-\mathcal{C}_i^{\mathcal{K}}} \supset N_{-D_i}$. Therefore, $i$'s utility is maximised as soon as $i$ is still in $W$.
			
			\item Now if $i$ misreports $v_i^\prime$ such that $i$ does not receive an item and becomes a critical parent of $W$ in group $(2)$. In this case, $i$'s utility is $u_i(\theta_i, \theta^\prime) = \mathcal{SW}_{-\mathcal{C}_i^{\mathcal{K}}} - \mathcal{SW}_{-D_i}$. Since $v_i$ is not considered any more in $\mathcal{SW}_{-\mathcal{C}_i^{\mathcal{K}}}$, it means that $\mathcal{SW}_{-\mathcal{C}_i^{\mathcal{K}}}$ is at most the social welfare when $v_i$ is considered. Therefore, the utility is not better than reporting $v_i$.
			
			\item Lastly if $i$ misreports $v_i^\prime$ such that $i$ is in group $(3)$, then $u_i(\theta_i, \theta^\prime) = 0$, which is not better than reporting $v_i$ truthfully. 
		\end{itemize}
		Therefore, fixing $i$'s invitation to be $r_i^\prime$, reporting $v_i$ truthfully maximizes $i$'s utility.
		
		\item Fix $i$'s valuation report to be $v_i$, change $i$'s invitation to be any subset $r_i^\prime$ of $r_i$. For any $\theta_i^\prime = (v_i, r_i^\prime)$ and $\theta_i^{\prime\prime} = (v_i, r_i^{\prime\prime})$ such that $r_i \supseteq r_i^{\prime} \supset r_i^{\prime\prime}$, then we have $\mathcal{C}_i(\theta_i^\prime, \theta_{-i}^\prime) \supseteq \mathcal{C}_i(\theta_i^{\prime\prime}, \theta_{-i}^\prime)$, because inviting more neighbours will bring more buyers to join the sale. When we remove the buyers whose valuation is among the top $\mathcal{K}$ largest from $\mathcal{C}_i(\theta_i^\prime, \theta_{-i}^\prime)$ and $\mathcal{C}_i(\theta_i^{\prime\prime}, \theta_{-i}^\prime)$ respectively, let the corresponding $\mathcal{C}_i^\mathcal{K}$ be $\mathcal{C}_i^{\mathcal{K}, r_i^\prime}$ and $\mathcal{C}_i^{\mathcal{K}, r_i^{\prime\prime}}$. Since $\mathcal{C}_i(\theta_i^\prime, \theta_{-i}^\prime) \supseteq \mathcal{C}_i(\theta_i^{\prime\prime}, \theta_{-i}^\prime)$, we have more buyers whose action is not $nil$ in $N_{-\mathcal{C}_i^{\mathcal{K}, r_i^\prime}}$ than those in $N_{-\mathcal{C}_i^{\mathcal{K}, r_i^{\prime\prime}}}$. Therefore, we get $\mathcal{SW}_{-\mathcal{C}_i^{\mathcal{K},r_i^\prime}} \geq \mathcal{SW}_{-\mathcal{C}_i^{\mathcal{K}, r_i^{\prime\prime}}}$, i.e. $\mathcal{SW}_{-\mathcal{C}_i^{\mathcal{K},r_i^\prime}}$ is maximised when $r_i^\prime = r_i$. Thus, $u_i(\theta_i, \theta^\prime)$ is maximised when $\theta_i^\prime = \theta_i$.
	\end{itemize}
	
	\item \textbf{Group $(2)$}: for each $i\in \bigcup_{j\in W}\mathcal{P}_j(\theta^\prime) \setminus W$, her utility is \\
	$u_i(\theta_i, \theta^\prime) = \mathcal{SW}_{-\mathcal{C}_i^{\mathcal{K}}} - \mathcal{SW}_{-D_i}$.
	\begin{itemize}
		\item Fix $i$'s invitation to be $r_i^\prime$, then $\mathcal{C}_i^{\mathcal{K}}$ is fixed. 
		\begin{itemize}
			\item If $i$ misreports $v_i^\prime$, but $i$ is still in group $(2)$, then $u_i(\theta_i, \theta^\prime)$ does not change, as both $\mathcal{SW}_{-\mathcal{C}_i^{\mathcal{K}}}$ and $\mathcal{SW}_{-D_i}$ are independent of $v_i^\prime$.
			\item If $i$ misreports $v_i^\prime$ such that $i\in W$, then $u_i(\theta_i, \theta^\prime) = v_i + (\mathcal{SW}_{-\mathcal{C}_i^{\mathcal{K}}} - v_i^\prime) - \mathcal{SW}_{-D_i}$. If $ v_i + (\mathcal{SW}_{-\mathcal{C}_i^{\mathcal{K}}} - v_i^\prime)$ is greater than $\mathcal{SW}_{-\mathcal{C}_i^{\mathcal{K}}}$ when $i$ reports $v_i$, it contradicts that $\mathcal{SW}_{-\mathcal{C}_i^{\mathcal{K}}}$ is optimal. Thus, $i$'s utility is not better than reporting $v_i$.
			\item $i$ cannot misreports $v_i^\prime$ to become a member of group $(3)$.
		\end{itemize}
		
		\item Fix $i$'s valuation report to be $v_i$, inviting all neighbours $r_i$ maximises $i$'s utility. The proof is the same as the proof for group $(1)$.
	\end{itemize}
	
	\item \textbf{Group $(3)$}: for each $i$ who is not a member of group $(1)$ or group $(2)$, her utility is 
	$u_i(\theta_i, \theta^\prime) = 0$.
	\begin{itemize}
		\item Fix $i$'s invitation to be $r_i^\prime$, $i$ may misreport $v_i^\prime$ to become a member of $W$. However, in this case, $u_i(\theta_i, \theta^\prime) = v_i + (\mathcal{SW}_{-\mathcal{C}_i^{\mathcal{K}}} - v_i^\prime) - \mathcal{SW}_{-D_i}$. We know that if $i$ reports $v_i$ truthfully, we have $\mathcal{SW}_{-\mathcal{C}_i^{\mathcal{K}}} =  \mathcal{SW}_{-D_i}$. If $i$ misreports to get $v_i + (\mathcal{SW}_{-\mathcal{C}_i^{\mathcal{K}}} - v_i^\prime)$, then $v_i + (\mathcal{SW}_{-\mathcal{C}_i^{\mathcal{K}}} - v_i^\prime)$ must be less than or equal to $\mathcal{SW}_{-D_i}$. Therefore, it is not worth misreporting $v_i^\prime$.
		\item Fix $i$'s valuation report to be $v_i$, if inviting all $r_i$ does not bring $i$ to group $(1)$ or group $(2)$, then inviting less will keep $i$ in group $(3)$. Therefore, inviting all $r_i$ is the best that $i$ can do to optimise her utility.
	\end{itemize}
\end{proof}

Next we show that the seller's revenue is improved with GIDM compared with the revenue she can get with other truthful mechanisms without advertising, especially we compare it with VCG.

Without advertising, the seller can only sell the $\mathcal{K}$ items to her neighbours $r_s$. Assume that $|r_s| > \mathcal{K}$, then the revenue of applying VCG among $r_s$ is $R^{VCG} = \mathcal{K}\times v_{\mathcal{K}+1}$, where $v_{\mathcal{K}+1}$ is the $(\mathcal{K}+1)$-th largest valuation report among $r_s$. Under VCG, $i$ may improve the revenue by selling less. 
No matter how many items the seller chooses to sell under VCG, let $\mathcal{K}$ be the actually number of items that the seller is selling under both VCG and GIDM.

{Theorem~\ref{thm_rev} proves that the revenue of GIDM is not less than the revenue of VCG when the number of neighbours of the seller is more than $\mathcal{K}$. Note that when the number of the seller's neighbours is less than or equal to $\mathcal{K}$, the revenue of VCG is zero.}

\begin{theorem}
	\label{thm_rev}
	The revenue of the generalised information diffusion mechanism is greater than or equal to $\mathcal{K}\times v_{\mathcal{K}+1}$, where $v_{\mathcal{K}+1}$ is the $(\mathcal{K}+1)$-th largest valuation report among $r_s$, assume that $|r_s| > \mathcal{K}$.
\end{theorem}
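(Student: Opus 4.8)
The plan is to reduce the seller's revenue to a sum of ``per‑subtree'' contributions of the optimal allocation tree and to lower bound each one. First I would dispose of the trivial case $v_{\mathcal{K}+1}=0$; otherwise, since $|r_s|>\mathcal{K}$ forces at least $\mathcal{K}+1$ buyers with strictly positive value, both the efficient allocation and GIDM allocate exactly $\mathcal{K}$ items, so $\sum_{c\in Children(s)} w_c(T^{opt}(\theta'))=\mathcal{K}$, where the sum ranges over the children of the root $s$ in $T^{opt}(\theta')$. Only nodes of $T^{opt}(\theta')$ (the winners and their critical parents) have nonzero payment, and each such node lies in the subtree rooted at a unique $c\in Children(s)$, so the revenue splits as $R^{GIDM}=\sum_{c\in Children(s)} R_c$, where $R_c$ is the total payment collected from the subtree of $T^{opt}(\theta')$ rooted at $c$. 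It therefore suffices to prove $R_c\ge w_c(T^{opt}(\theta'))\cdot v_{\mathcal{K}+1}$ for every $c\in Children(s)$.

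The core of the argument is a telescoping lemma, proved by induction along the DFS order of the allocation phase: inside the subtree rooted at $c$, the term $\mathcal{SW}_{-\mathcal{C}_i^{\mathcal{K}}}$ attached to a node cancels against the terms $\mathcal{SW}_{-D_j}$ attached to its children in $T^{opt}(\theta')$, while the ``$+v_i'$'' carried by each winner $i\in W$ is absorbed by the $GetFrom$ bookkeeping (an item a node reclaims from one of its critical children cancels exactly against that child's own value term). After all cancellations, $R_c$ equals $\mathcal{SW}_{-D_c}$ minus the total efficient‑allocation value of the efficient winners lying outside the subtree of $c$; equivalently, $R_c$ is the sum of the $w_c(T^{opt}(\theta'))$ largest reported valuations among the buyers that remain available in $N_{-D_c}$ once the $\mathcal{K}-w_c(T^{opt}(\theta'))$ efficient winners of the other subtrees are held fixed. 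This is the step I expect to be the main obstacle: the definitions of $\mathcal{SW}_{-D_i}$ and $\mathcal{SW}_{-\mathcal{C}_i^{\mathcal{K}}}$ carry several forced‑winner constraints ($N_i^{received}$, $N_i^{out}$, and the surviving $N^{opt}$ nodes) that change as the DFS descends, and one must verify that these constraints line up level by level so that consecutive levels telescope and so that no value is double counted across sibling branches when a node takes an item from a critical child.

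Granting the lemma, the final inequality is a counting argument over $r_s$. Since $v_{\mathcal{K}+1}$ is the $(\mathcal{K}+1)$‑th largest valuation among $r_s$, at least $\mathcal{K}+1$ neighbours of $s$ have reported value at least $v_{\mathcal{K}+1}$; none of them lies in the subtree of $c$ except possibly $c$ itself (a critical child of $c$ cannot be a neighbour of $s$, as it is reachable directly from $s$), and at most $\mathcal{K}-w_c(T^{opt}(\theta'))$ of them are efficient winners of the other subtrees that are held fixed. Hence at least $w_c(T^{opt}(\theta'))$ of these neighbours stay available in $N_{-D_c}$, and each contributes value at least $v_{\mathcal{K}+1}$ to the residual sum, giving $R_c\ge w_c(T^{opt}(\theta'))\cdot v_{\mathcal{K}+1}$. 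Summing over $c\in Children(s)$ and using $\sum_c w_c(T^{opt}(\theta'))=\mathcal{K}$ yields $R^{GIDM}\ge \mathcal{K}\cdot v_{\mathcal{K}+1}$. As a sanity check, for $\mathcal{K}=1$ the tree is a single path, $R_c$ collapses to the highest valuation reachable without $s$'s first critical neighbour, and the bound specialises to the IDM guarantee of Li et al.
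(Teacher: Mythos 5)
Your decomposition and your final counting step coincide with the paper's own proof: the paper likewise rewrites each payment as a difference of two top-$k$ sums, $(\mathcal{SW}_{-D_i}-\mathcal{V}_{N_i^{still}})$ minus $(\mathcal{SW}_{-\mathcal{C}_i^{\mathcal{K}}}-\mathcal{V}_{N_i^{still}})$ (minus $v_i^\prime$ for winners), cancels second terms against the children's first terms down $T^{opt}(\theta^\prime)$, reduces the revenue to $\sum_{i\in Children(s)}(\mathcal{SW}_{-D_i}-\mathcal{V}_{N_i^{still}})+\Delta$ with $\Delta\ge 0$, and then argues that $D_{s_i}$ contains at most one member of $r_s$ while $N_{s_i}^{still}$ contains at most $\mathcal{K}-k_{s_i}$ of them, so each of the top $k_{s_i}$ remaining valuations is at least $v_{\mathcal{K}+1}$ --- exactly your last paragraph.

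The genuine problem is the middle step, which you yourself flag as the obstacle: you assert exact cancellation (``$R_c$ \emph{equals} $\mathcal{SW}_{-D_c}$ minus the value of the efficient winners outside the subtree'') and plan to prove this identity by induction along the DFS. That identity is false in general, and no bookkeeping of $GetFrom$ will rescue it: a node's second term $\mathcal{SW}_{-\mathcal{C}_i^{\mathcal{K}}}-\mathcal{V}_{N_i^{still}}$ is the sum of the top $k_i$ valuations over $N\setminus(\mathcal{C}_i^{\mathcal{K}}\cup N_i^{still})$, where $\mathcal{C}_i^{\mathcal{K}}$ removes not only $i$'s top-$\mathcal{K}$ critical children but also their critical parents below $i$ and those parents' critical children, whereas each child's first term $\mathcal{SW}_{-D_{i_l}}-\mathcal{V}_{N_{i_l}^{still}}$ is a top-$k_{i_l}$ sum over the larger set $N\setminus(D_{i_l}\cup N_{i_l}^{still})$, and different siblings' top-$k$ sets may even pick up the same outside buyers; hence the children's first terms can strictly exceed the parent's second term. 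What is true --- and all that is needed, since the slack only increases revenue --- is the one-sided inequality $\mathcal{SW}_{-\mathcal{C}_i^{\mathcal{K}}}-\mathcal{V}_{N_i^{still}}\le\sum_{i_l}\bigl(\mathcal{SW}_{-D_{i_l}}-\mathcal{V}_{N_{i_l}^{still}}\bigr)$ (with $v_i^\prime$ subtracted on the left and $\sum_{i_l}k_{i_l}=k_i-1$ when $i\in W$), which follows from $k_i=\sum_{i_l}k_{i_l}$, $D_{i_l}\cup N_{i_l}^{still}\subseteq\mathcal{C}_i^{\mathcal{K}}\cup N_i^{still}$ and monotonicity of top-$k$ sums; this is precisely the paper's Lemma~\ref{lem_gidm_pay2}. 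Replacing your claimed equality by this inequality, your bound $R_c\ge w_c(T^{opt}(\theta^\prime))\cdot v_{\mathcal{K}+1}$ and the final summation go through unchanged (note that for $c\in Children(s)$ the number of items passed to $c$ equals the initial weight $w_c(T^{opt}(\theta^\prime))$, since the seller distributes items before any weight updates).
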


Before proving the theorem, let us first show some relationship between the payment of a buyer and the payments of her direct children in the optimal allocation tree. First of all, given the buyers' action profile $\theta^\prime$, each buyers $i$'s payment under the generalised information diffusion mechanism is equal to:
	\begin{equation*}
		\begin{cases}
			(\mathcal{SW}_{-D_i} - \mathcal{V}_{N_i^{still}}) - (\mathcal{SW}_{-\mathcal{C}_i^{\mathcal{K}}}& - \mathcal{V}_{N_i^{still}} - v_i^\prime) \\
			& \text{if $i\in W$,}\\
			(\mathcal{SW}_{-D_i} - \mathcal{V}_{N_i^{still}}) - (\mathcal{SW}_{-\mathcal{C}_i^{\mathcal{K}}}& - \mathcal{V}_{N_i^{still}}) \\
			& \text{if $i\in \displaystyle\bigcup_{j\in W}\mathcal{P}_j(\theta^\prime) \setminus W$,} \\
			0    & \text{otherwise}\\
		\end{cases}
	\end{equation*}
	where 
	\begin{align*}
		& \mathcal{V}_{N_i^{still}} = \sum_{j\in N_i^{still}} v_j^\prime\\
		& N_i^{still} = (N^{opt} \setminus (D_i \cup N_i^{out})) \cup N_i^{received}\\
		& D_i = \{i\} \cup \mathcal{C}_i(\theta^\prime) \\
		& N_i^{received} = W \cap \mathcal{P}_i(\theta^\prime) \\
		&  N_i^{out} = \{j\not\in N_i^{received}| j = GetFrom(l), \forall {l\in N_i^{received}}\}  
	\end{align*}

In the payment definition of GIDM, terms $\mathcal{SW}_{-D_i}$ and $\mathcal{SW}_{-\mathcal{C}_i^{\mathcal{K}}}$ both count the valuations of all buyers in $N_i^{still}$. It is clear that $|N_i^{still}| = \mathcal{K}-k_i$. Therefore, we can remove all the valuations of $N_i^{still}$ from both $\mathcal{SW}_{-D_i}$ and $\mathcal{SW}_{-\mathcal{C}_i^{\mathcal{K}}}$ to get above form of the payments. Following this, Lemma~\ref{lem_gidm_pay2} shows the deeper relationship.

\begin{lemma}
	\label{lem_gidm_pay2}
	Given all buyers' action profile $\theta^\prime$, under the generalised information diffusion mechanism, for all $i\in \bigcup_{j\in W}\mathcal{P}_j(\theta^\prime) \setminus W$, let $k_i$ be the number of items passed to $i$, and there are $m\geq 1$ children of $i$ who have received items from $i$ denoted by $\{i_1, \cdots, i_m\}$, and let $k_{i_l}$ be the number of items $i$ gives to $i_l$, we have 
	$$\mathcal{SW}_{-\mathcal{C}_i^{\mathcal{K}}} -  \mathcal{V}_{N_i^{still}} \leq \sum_{i_l} (\mathcal{SW}_{-D_{i_l}} -  \mathcal{V}_{N_{i_l}^{still}})$$
\end{lemma}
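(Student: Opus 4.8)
The plan is to read both sides of the claimed inequality as ``the largest value obtainable by freely placing a fixed number of extra items,'' and then to redistribute the optimal extra placement of the left-hand side among the children $i_1,\dots,i_m$. Write $B_i := \mathcal{SW}_{-\mathcal{C}_i^{\mathcal{K}}} - \mathcal{V}_{N_i^{still}}$ and $A_{i_l} := \mathcal{SW}_{-D_{i_l}} - \mathcal{V}_{N_{i_l}^{still}}$, so that the goal is $B_i \le \sum_{l=1}^{m} A_{i_l}$. As noted just before the lemma, in the programme defining $\mathcal{SW}_{-\mathcal{C}_i^{\mathcal{K}}}$ the forced part always fills $N_i^{still}$, and $|N_i^{still}| = \mathcal{K} - k_i$, so $B_i$ equals the largest total valuation of a set $S \subseteq N \setminus (\mathcal{C}_i^{\mathcal{K}} \cup N_i^{still})$ with $|S| = k_i$; because $i \notin W$ we may take such an optimal set $S^\ast$ with $i \notin S^\ast$. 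Likewise $A_{i_l}$ equals the largest total valuation of a $k_{i_l}$-set disjoint from $D_{i_l} \cup N_{i_l}^{still}$. Since $i \notin W$, the $k_i$ items handed to $i$ are all forwarded to its children, so $\sum_{l=1}^{m} k_{i_l} = k_i = |S^\ast|$. I would therefore fix any partition $S^\ast = S_1 \sqcup \cdots \sqcup S_m$ with $|S_l| = k_{i_l}$, show each $S_l$ is disjoint from $D_{i_l} \cup N_{i_l}^{still}$, and then note that appending $S_l$ to the forced set $N_{i_l}^{still}$ is a feasible point of the $\mathcal{SW}_{-D_{i_l}}$ programme, giving $A_{i_l} \ge \sum_{j \in S_l} v_j^\prime$; summing over $l$ yields $\sum_l A_{i_l} \ge \sum_{j \in S^\ast} v_j^\prime = B_i$.

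Everything then hinges on showing $S^\ast$ avoids $D_{i_l}$ and $N_{i_l}^{still}$ for each child, which I would obtain from three facts, all under $\theta^\prime$. (i) Since $i_l$ is the direct child of $i$ in $T^{opt}(\theta^\prime)$, the tree structure forces $\mathcal{P}_{i_l}(\theta^\prime) = \mathcal{P}_i(\theta^\prime) \cup \{i\}$; as $i \notin W$ this gives $N_{i_l}^{received} = W \cap \mathcal{P}_{i_l}(\theta^\prime) = N_i^{received}$, and hence also $N_{i_l}^{out} = N_i^{out}$. (ii) Because $i_l$ received $k_{i_l} \ge 1$ items, the subtree of $T^{opt}(\theta^\prime)$ rooted at $i_l$ contains an efficient winner; any efficient winner inside $i$'s subtree lies in $\mathcal{C}_i(\theta^\prime)^{\mathcal{K}}$ (its report is among the top $\mathcal{K}$ overall, hence among the top $\mathcal{K}$ within $\mathcal{C}_i(\theta^\prime)$), and $i \succ_{\theta^\prime} i_l$, so $i_l \in \mathcal{P}(\mathcal{C}_i(\theta^\prime)^{\mathcal{K}})$ and therefore $D_{i_l} = \{i_l\} \cup \mathcal{C}_{i_l}(\theta^\prime) \subseteq \mathcal{P}(\mathcal{C}_i(\theta^\prime)^{\mathcal{K}}) \cup \mathcal{C}(\mathcal{P}(\mathcal{C}_i(\theta^\prime)^{\mathcal{K}})) \subseteq \mathcal{C}_i^{\mathcal{K}}$; in particular $S^\ast \cap D_{i_l} = \emptyset$. (iii) Every efficient winner lies in $\{i\} \cup \mathcal{C}_i^{\mathcal{K}}$ (this covers every efficient winner inside $i$'s subtree, since those distinct from $i$ are top-$\mathcal{K}$ critical children of $i$), or else in $(N^{opt} \setminus D_i) \setminus N_i^{out} \subseteq N_i^{still}$, or in $N_i^{out}$; as $S^\ast$ is disjoint from $\{i\} \cup \mathcal{C}_i^{\mathcal{K}} \cup N_i^{still}$, this gives $S^\ast \cap N^{opt} \subseteq N_i^{out}$. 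Putting these together, $N_{i_l}^{still} = \big((N^{opt} \setminus D_{i_l}) \setminus N_i^{out}\big) \cup N_i^{received}$: its second part lies in $N_i^{still}$ and so misses $S^\ast$, and its first part meets $S^\ast$ only inside $N^{opt} \setminus N_i^{out}$, which misses $S^\ast$ by (iii); hence $S^\ast \cap N_{i_l}^{still} = \emptyset$, giving the required disjointness.

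The hard part is fact (ii) in the borderline case where $i_l$ is itself an efficient winner, receives exactly one item ($k_{i_l} = 1$), and every node of $\mathcal{C}_{i_l}(\theta^\prime)$ is ``dormant'' (neither an efficient winner nor a critical parent of one): then $i_l$ need not lie in $\mathcal{P}(\mathcal{C}_i(\theta^\prime)^{\mathcal{K}})$ and $\mathcal{C}_{i_l}(\theta^\prime)$ may escape $\mathcal{C}_i^{\mathcal{K}}$, so in principle a dormant critical child of $i_l$ could be selected into $S^\ast$. I would close this either by arguing that such a node is never an optimal choice for $S^\ast$ (after deleting $\mathcal{C}_i^{\mathcal{K}}$ there is always an alternative of at least equal value outside $i_l$'s subtree, the node being dormant and hence below the top $\mathcal{K}$ reports), or, when $m \ge 2$, by moving that element to a sibling block $S_{l^\prime}$, which is legitimate since a dormant node lies outside $D_{i_{l^\prime}}$ (a different subtree) and outside $N_{i_{l^\prime}}^{still}$ (being outside $N^{opt}$ and $N_i^{received}$). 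Establishing this reassignment rigorously — together with the counting identity $|N_i^{still}| = \mathcal{K} - k_i$ on which it leans — is the only genuinely non-routine step; the rest is bookkeeping with the definitions of $\mathcal{C}_i^{\mathcal{K}}$, $N_i^{still}$, $N_i^{received}$, and $N_i^{out}$.
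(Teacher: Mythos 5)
Your proposal follows essentially the same route as the paper's proof: both rewrite $\mathcal{SW}_{-\mathcal{C}_i^{\mathcal{K}}}-\mathcal{V}_{N_i^{still}}$ and $\mathcal{SW}_{-D_{i_l}}-\mathcal{V}_{N_{i_l}^{still}}$ as top-$k_i$ and top-$k_{i_l}$ selections over the complements of the excluded sets, use $k_i=\sum_l k_{i_l}$ (valid since $i\notin W$), and compare the two ground sets via the containment $D_{i_l}\cup N_{i_l}^{still}\subseteq \mathcal{C}_i^{\mathcal{K}}\cup N_i^{still}$, your explicit partition of the optimal set $S^{\ast}$ being just the spelled-out form of the paper's ``top-$k_{i_l}$ of a larger set'' comparison. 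The borderline case you flag in fact (ii) is precisely the step the paper dispatches in one line by asserting $\mathcal{C}_i^{\mathcal{K}}\supseteq\bigcup_{l}D_{i_l}$ from $\mathcal{K}\ge k_i\ge k_{i_l}$, so your treatment is no less careful than the published argument at that point; your substitute-witness/reassignment idea is the natural way to close it, though, as you acknowledge, it remains a sketch rather than a completed step.
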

\begin{proof}
	Given the above form of the payments, we have $\mathcal{SW}_{-\mathcal{C}_i^{\mathcal{K}}} -  \mathcal{V}_{N_i^{still}}$ is the sum of the top $k_i$ highest valuations among buyers in $N \setminus (\mathcal{C}_i^{\mathcal{K}} \cup N_i^{still})$. For each $i_l$, $\mathcal{SW}_{-D_{i_l}} - \mathcal{V}_{N_{i_l}^{still}}$ is the sum of the top $k_{i_l}$ highest valuations among buyers in $N \setminus (D_{i_l} \cup N_{i_l}^{still})$. Since $i\not\in W$, i.e. $i$ does not receive any item, we have $k_i = \sum_{i_l} k_{i_l}$, $N_i^{still} \subseteq N_{i_l}^{still}$ and $|N_{i_l}^{still} \setminus N_i^{still}| = k_i - k_{i_l}$. Since $\mathcal{K} \geq k_i \geq k_{i_l}$, we have $\mathcal{C}_i^{\mathcal{K}} \supseteq \bigcup_{i_l}D_{i_l}$ and $N_{i_l}^{still} \setminus N_i^{still} \subset \mathcal{C}_i^{\mathcal{K}}$, therefore we get $D_{i_l} \cup N_{i_l}^{still} \subseteq \mathcal{C}_i^{\mathcal{K}} \cup N_i^{still}$ and $N \setminus (D_{i_l} \cup N_{i_l}^{still}) \supseteq N\setminus (\mathcal{C}_i^{\mathcal{K}} \cup N_i^{still})$. 
	
	Since $\mathcal{SW}_{-\mathcal{C}_i^{\mathcal{K}}} -  \mathcal{V}_{N_i^{still}}$ is the sum of the top $k_i$ highest valuations in $N \setminus (\mathcal{C}_i^{\mathcal{K}} \cup N_i^{still})$, while $\mathcal{SW}_{-D_{i_l}} -  \mathcal{V}_{N_{i_l}^{still}}$ is the sum of the top $k_{i_l}$ highest valuations in a larger set $N \setminus (D_{i_l} \cup N_{i_l}^{still})$, we conclude that $\mathcal{SW}_{-\mathcal{C}_i^{\mathcal{K}}} -  \mathcal{V}_{N_i^{still}} \leq \sum_{i_l} (\mathcal{SW}_{-D_{i_l}} -  \mathcal{V}_{N_{i_l}^{still}})$.
\end{proof}

Following Lemma~\ref{lem_gidm_pay2}, we can further prove that for all buyers $i\in W$,  $$(\mathcal{SW}_{-\mathcal{C}_i^{\mathcal{K}}} -  \mathcal{V}_{N_i^{still}} - v_i^\prime) \leq \sum_{i_l} (\mathcal{SW}_{-D_{i_l}} -  \mathcal{V}_{N_{i_l}^{still}})$$
where $k_i - 1 = \sum_{i_l} k_{i_l}$ because $i$ receives an item. Furthermore, if $k_i = 1$, then $i$ does not give any item to her children and we have $\mathcal{SW}_{-\mathcal{C}_i^{\mathcal{K}}} -  \mathcal{V}_{N_i^{still}} - v_i^\prime = 0$.

\begin{proof}[Proof Theorem~\ref{thm_rev}]
	Following \ref{lem_gidm_pay2}, we conclude that for any buyers $i\in \bigcup_{j\in W}\mathcal{P}_j(\theta^\prime) \cup W$, the second term of $i$'s payment ($\mathcal{SW}_{-\mathcal{C}_i^{\mathcal{K}}} -  \mathcal{V}_{N_i^{still}}$ or $\mathcal{SW}_{-\mathcal{C}_i^{\mathcal{K}}} -  \mathcal{V}_{N_i^{still}}  - v_i^\prime$) is either $0$ or offset by the sum of all the first terms of the payments of $i$'s children 
	according to Lemma~\ref{lem_gidm_pay2}.
	
	Therefore, the sum of all buyers' payments is equal to 
	$$
	\sum_{i\in Children(s)} (\mathcal{SW}_{-D_i} -  \mathcal{V}_{N_i^{still}})  + \Delta
	$$
	where $\Delta \geq 0$. It is the sum of all the first terms of the payments of the seller's direct children in the optimal allocation tree $T^{opt}(\theta^\prime)$ plus the remaining of all the offsets.
	
	Assume $s$ has $m$ children in $T^{opt}(\theta^\prime)$, denoted by $\{s_1, \cdots, s_m\}$. The number of items passed to them are $k_{s_1}, \cdots, k_{s_m}$, where $\sum_{s_i} k_{s_i} = \mathcal{K}$. For all $s_i \in \{s_1, \cdots, s_m\}$, we have $\mathcal{SW}_{-D_{s_i}} -  \mathcal{V}_{N_{s_i}^{still}}$ is the sum of the top $k_{s_i}$ highest valuations among all buyers in $N \setminus (D_{s_i} \cup N_{s_i}^{still})$. 
	Both $D_{s_i}$ and $N_{s_i}^{still}$ may contain some buyers from $s$'s neighbours $r_s$, but these two sets cannot contain all $r_s$, i.e. $r_s \not\subseteq D_{s_i} \cup N_{s_i}^{still}$ as there are at most $\mathcal{K}$ items and $|r_s| > \mathcal{K}$, $i\in r_s \wedge i \subseteq D_{s_i} \cup N_{s_i}^{still}$ if and only if $i$ or one of $i$'s critical child has the top $\mathcal{K}$ highest valuation report among all buyers $N$. Specifically, for each $s_i$, $D_{s_i}$ contains at most one buyer from $r_s$, and $N_{s_i}^{still}$ contains at most $\mathcal{K} - k_{s_i}$ buyers from $r_s$. Therefore, the minimum of the top $k_{s_i}$ highest valuations among all buyers in $N\setminus (D_{s_i} \cup N_{s_i}^{still})$ is at least $v_{\mathcal{K}+1}$, the top $(\mathcal{K}+1)$-th largest valuation among all buyers in $r_s$. Thus, we have
	$$
	\sum_{i\in Children(s)} (\mathcal{SW}_{-D_i} -  \mathcal{V}_{N_i^{still}}) + \Delta \geq \mathcal{K}\times v_{\mathcal{K}+1}
	$$
\end{proof}

\section{Conclusions}
\label{sect_con}
We have proposed an auction mechanism that gives sale promotions to a seller to sell multiple homogeneous items via a social network. It generalises the mechanism proposed by Li et al.~\cite{li2017mechanism} for a single item setting. The mechanism is run by the seller, and she does not need to pay in advance for getting the promotions. The mechanism incentivizes all buyers who are aware of the sale to do free promotions to their neighbours, because their promotions will be rewarded if some buyers invited by them buy the items in the end. Besides the free advertising part, all buyers will also truthfully report their valuations to compete for the sale with people they have invited. Eventually, buyers who are closer to the seller will have a higher likelihood to win items than their children, because their children cannot participate in the sale without their promotions/invitations. This is the key to guarantee that all buyers are happy to invite more buyers to compete with themselves for the limited resources.

Since buyers who are closer to the sellers have the ability to control which neighbours they want to promote to, they can also control how the items are allocated. For example, when a seller sells multiple identical items to fixed number of buyers, the seller can choose to sell less with higher payments to maximise her revenue. This also applies to the buyers in our setting and they can invite more or less neighbours to control how many items are sold to their children, which may give them different rewards/utilities. On the other hand, buyers' children can also manipulate in order to satisfy their parents' needs. Therefore, it is extremely challenging to define a truthful mechanism in more complex settings. To prevent buyers' manipulations mentioned above, we have carefully chosen the allocation and payments. In particular, the definition of $\mathcal{C}_i^\mathcal{K}$ for each buyer $i$ in GIDM plays the essential role to stop their manipulations. Given $\mathcal{C}_i^\mathcal{K}$, buyer $i$'s payment does not depend on how many items her children get, therefore, she is not incentivised to control how many items her children will get.

In this paper, we assumed that each buyer only requires at most one item. We will easily lose the control if they require more than one item with different marginal valuations. Offering truthful mechanisms for general combinatorial valuation settings is highly demanded. Furthermore, we assumed that inviting neighbours in the social network does not incur a cost, e.g. posting an advertisement via facebook or twitter. However, there might be a cost to do so, so a new mechanism will be required to guarantee that buyers' promotion costs will be covered. A special social network with public diffusion/transfer costs was studied by Li et al.~\cite{li2018mechanism}, but covering diffusion costs in general networks is still open.  

\appendix
\section*{Updates}
The generalised information diffusion mechanism proposed by us in~\cite{Zhao2018} has a typo in one constraint. What we needed is a weaker constraint, which is implied by the misspecified constraint though, because the additional limitation given by the misspecified one may affect the properties of the mechanism. We correct the typo in this update, which is highlighted. Note that, all the results and the proofs presented in \citep{Zhao2018} are not affected by this correction.

Under the description of the generalised information diffusion mechanism:
\begin{enumerate}
\item Under the payment definition, in the definition of $\mathcal{SW}_{-D_i}$, change the constraint $$\forall {j \in N_i^{out}}, \pi_j(\theta^\prime) = 0$$ to 
{\color{red}$$\forall {j \in (N^{opt}\setminus D_i) \setminus N_i^{out}}, \pi_j(\theta^\prime) = 1$$}

\item Under the allocation definition, in the definition of $\mathcal{SW}_{-\mathcal{C}_i^{\mathcal{K}}}$, change the constraint 
$$\forall {j\neq i\in N_i^{out}}, \pi_j(\theta^\prime) = 0$$
to
{\color{red}$$\forall {j\neq i\in (N^{opt} \setminus \mathcal{C}_i^{\mathcal{K}} )\setminus N_i^{out}}, \pi_j(\theta^\prime) = 1$$}

\item In the third paragraph under the description of the generalised information diffusion mechanism, change the sentence "\textbf{all buyers in $N_i^{out}$ except for $i$ do not receive items}" to "{\color{red}all buyers in $(N^{opt} \setminus \mathcal{C}_i^{\mathcal{K}} )\setminus N_i^{out}$ except for $i$ still receive items}".
\end{enumerate}

The differences between the original constraints and the corrections are that:
\begin{itemize}
\item In the definition of $\mathcal{SW}_{-D_i}$, the original constraint $$\forall {j \in N_i^{out}}, \pi_j(\theta^\prime) = 0$$ implies the correction $$\forall {j \in (N^{opt}\setminus D_i) \setminus N_i^{out}}, \pi_j(\theta^\prime) = 1$$
The original constraint means that all buyers in $N_i^{out}$ do not receive items, while the correction indicates that all buyers in $(N^{opt}\setminus D_i) \setminus N_i^{out}$ still receive items, i.e. buyers in $N_i^{out}$ are allowed to receive items if their valuations are large enough under $\mathcal{SW}_{-D_i}$, but they cannot take items from the initial winners in $(N^{opt}\setminus D_i) \setminus N_i^{out}$ (even if the buyers' valuations in $N_i^{out}$ are greater than the buyers' valuations in $(N^{opt}\setminus D_i) \setminus N_i^{out}$). Under the original constraint, in some rare case, a buyer may over report her valuation to receive one item but pays something less than her true valuation, which affects the IC property. 
\item a similar logic applies to the second correction.
\end{itemize}


\bibliographystyle{ACM-Reference-Format}  
\newpage

\begin{thebibliography}{00}


\ifx \showCODEN    \undefined \def \showCODEN     #1{\unskip}     \fi
\ifx \showDOI      \undefined \def \showDOI       #1{#1}\fi
\ifx \showISBNx    \undefined \def \showISBNx     #1{\unskip}     \fi
\ifx \showISBNxiii \undefined \def \showISBNxiii  #1{\unskip}     \fi
\ifx \showISSN     \undefined \def \showISSN      #1{\unskip}     \fi
\ifx \showLCCN     \undefined \def \showLCCN      #1{\unskip}     \fi
\ifx \shownote     \undefined \def \shownote      #1{#1}          \fi
\ifx \showarticletitle \undefined \def \showarticletitle #1{#1}   \fi
\ifx \showURL      \undefined \def \showURL       {\relax}        \fi
\providecommand\bibfield[2]{#2}
\providecommand\bibinfo[2]{#2}
\providecommand\natexlab[1]{#1}
\providecommand\showeprint[2][]{arXiv:#2}

\bibitem[\protect\citeauthoryear{Clarke}{Clarke}{1971}]%
        {clarke1971multipart}
\bibfield{author}{\bibinfo{person}{Edward~H Clarke}.}
  \bibinfo{year}{1971}\natexlab{}.
\newblock \showarticletitle{Multipart pricing of public goods}.
\newblock \bibinfo{journal}{{\em Public choice\/}} \bibinfo{volume}{11},
  \bibinfo{number}{1} (\bibinfo{year}{1971}), \bibinfo{pages}{17--33}.
\newblock


\bibitem[\protect\citeauthoryear{Edelman, Ostrovsky, and Schwarz}{Edelman
  et~al\mbox{.}}{2007}]%
        {gsp2007}
\bibfield{author}{\bibinfo{person}{Benjamin Edelman}, \bibinfo{person}{Michael
  Ostrovsky}, {and} \bibinfo{person}{Michael Schwarz}.}
  \bibinfo{year}{2007}\natexlab{}.
\newblock \showarticletitle{{Internet Advertising and the Generalized
  Second-Price Auction: Selling Billions of Dollars Worth of Keywords}}.
\newblock \bibinfo{journal}{{\em American Economic Review\/}}
  \bibinfo{volume}{97}, \bibinfo{number}{1} (\bibinfo{year}{2007}),
  \bibinfo{pages}{242--259}.
\newblock


\bibitem[\protect\citeauthoryear{Emek, Karidi, Tennenholtz, and Zohar}{Emek
  et~al\mbox{.}}{2011}]%
        {emek2011mechanisms}
\bibfield{author}{\bibinfo{person}{Yuval Emek}, \bibinfo{person}{Ron Karidi},
  \bibinfo{person}{Moshe Tennenholtz}, {and} \bibinfo{person}{Aviv Zohar}.}
  \bibinfo{year}{2011}\natexlab{}.
\newblock \showarticletitle{Mechanisms for multi-level marketing}. In
  \bibinfo{booktitle}{{\em Proceedings of the 12th ACM conference on Electronic
  commerce}}. ACM, \bibinfo{pages}{209--218}.
\newblock


\bibitem[\protect\citeauthoryear{Goldberg and Hartline}{Goldberg and
  Hartline}{2001}]%
        {goldberg2001digital}
\bibfield{author}{\bibinfo{person}{Andrew~V Goldberg} {and}
  \bibinfo{person}{Jason~D Hartline}.} \bibinfo{year}{2001}\natexlab{}.
\newblock \showarticletitle{Competitive auctions for multiple digital goods}.
  In \bibinfo{booktitle}{{\em European Symposium on Algorithms}}. Springer,
  \bibinfo{pages}{416--427}.
\newblock


\bibitem[\protect\citeauthoryear{Goldberg, Hartline, and Wright}{Goldberg
  et~al\mbox{.}}{2001}]%
        {goldberg2001competitive}
\bibfield{author}{\bibinfo{person}{Andrew~V Goldberg}, \bibinfo{person}{Jason~D
  Hartline}, {and} \bibinfo{person}{Andrew Wright}.}
  \bibinfo{year}{2001}\natexlab{}.
\newblock \showarticletitle{Competitive auctions and digital goods}. In
  \bibinfo{booktitle}{{\em Proceedings of the twelfth annual ACM-SIAM symposium
  on Discrete algorithms}}. Society for Industrial and Applied Mathematics,
  \bibinfo{pages}{735--744}.
\newblock


\bibitem[\protect\citeauthoryear{Groves}{Groves}{1973}]%
        {groves1973incentives}
\bibfield{author}{\bibinfo{person}{Theodore Groves}.}
  \bibinfo{year}{1973}\natexlab{}.
\newblock \showarticletitle{Incentives in teams}.
\newblock \bibinfo{journal}{{\em Econometrica: Journal of the Econometric
  Society\/}} (\bibinfo{year}{1973}), \bibinfo{pages}{617--631}.
\newblock


\bibitem[\protect\citeauthoryear{Kempe, Kleinberg, and Tardos}{Kempe
  et~al\mbox{.}}{2003}]%
        {kempe2003maximizing}
\bibfield{author}{\bibinfo{person}{David Kempe}, \bibinfo{person}{Jon
  Kleinberg}, {and} \bibinfo{person}{{\'E}va Tardos}.}
  \bibinfo{year}{2003}\natexlab{}.
\newblock \showarticletitle{Maximizing the spread of influence through a social
  network}. In \bibinfo{booktitle}{{\em Proceedings of the ninth ACM SIGKDD
  international conference on Knowledge discovery and data mining}}. ACM,
  \bibinfo{pages}{137--146}.
\newblock


\bibitem[\protect\citeauthoryear{Li, Hao, Zhao, and Zhou}{Li
  et~al\mbox{.}}{2017}]%
        {li2017mechanism}
\bibfield{author}{\bibinfo{person}{Bin Li}, \bibinfo{person}{Dong Hao},
  \bibinfo{person}{Dengji Zhao}, {and} \bibinfo{person}{Tao Zhou}.}
  \bibinfo{year}{2017}\natexlab{}.
\newblock \showarticletitle{Mechanism Design in Social Networks}. In
  \bibinfo{booktitle}{{\em Proceedings of the Thirty-First AAAI Conference on
  Artificial Intelligence}}. \bibinfo{pages}{586--592}.
\newblock

\bibitem[\protect\citeauthoryear{Li, Hao, Zhao, and Zhou}{Li
  et~al\mbox{.}}{2018}]%
        {li2018mechanism}
\bibfield{author}{\bibinfo{person}{Bin Li}, \bibinfo{person}{Dong Hao},
  \bibinfo{person}{Dengji Zhao}, {and} \bibinfo{person}{Tao Zhou}.}
  \bibinfo{year}{2018}\natexlab{}.
\newblock \showarticletitle{Customer Sharing in Economic
  Networks with Costs}. In \bibinfo{booktitle}{{\em Proceedings of the
  27th International Joint Conference on Artificial Intelligence and the 23rd
  European Conference on Artificial Intelligence}}.
\newblock


\bibitem[\protect\citeauthoryear{Myerson}{Myerson}{1981}]%
        {myerson1981optimal}
\bibfield{author}{\bibinfo{person}{Roger~B Myerson}.}
  \bibinfo{year}{1981}\natexlab{}.
\newblock \showarticletitle{Optimal auction design}.
\newblock \bibinfo{journal}{{\em Mathematics of Operations Research\/}}
  \bibinfo{volume}{6}, \bibinfo{number}{1} (\bibinfo{year}{1981}),
  \bibinfo{pages}{58--73}.
\newblock


\bibitem[\protect\citeauthoryear{Nisan, Roughgarden, Tardos, and
  Vazirani}{Nisan et~al\mbox{.}}{2007}]%
        {nisan2007algorithmic}
\bibfield{author}{\bibinfo{person}{Noam Nisan}, \bibinfo{person}{Tim
  Roughgarden}, \bibinfo{person}{Eva Tardos}, {and} \bibinfo{person}{Vijay~V
  Vazirani}.} \bibinfo{year}{2007}\natexlab{}.
\newblock \bibinfo{booktitle}{{\em Algorithmic game theory}}.
  Vol.~\bibinfo{volume}{1}.
\newblock \bibinfo{publisher}{Cambridge University Press Cambridge}.
\newblock


\bibitem[\protect\citeauthoryear{Pickard, Pan, Rahwan, Cebrian, Crane, Madan,
  and Pentland}{Pickard et~al\mbox{.}}{2011}]%
        {pickard2011time}
\bibfield{author}{\bibinfo{person}{Galen Pickard}, \bibinfo{person}{Wei Pan},
  \bibinfo{person}{Iyad Rahwan}, \bibinfo{person}{Manuel Cebrian},
  \bibinfo{person}{Riley Crane}, \bibinfo{person}{Anmol Madan}, {and}
  \bibinfo{person}{Alex Pentland}.} \bibinfo{year}{2011}\natexlab{}.
\newblock \showarticletitle{Time-critical social mobilization}.
\newblock \bibinfo{journal}{{\em Science\/}} \bibinfo{volume}{334},
  \bibinfo{number}{6055} (\bibinfo{year}{2011}), \bibinfo{pages}{509--512}.
\newblock


\bibitem[\protect\citeauthoryear{Rogers}{Rogers}{2010}]%
        {rogers2010diffusion}
\bibfield{author}{\bibinfo{person}{Everett~M Rogers}.}
  \bibinfo{year}{2010}\natexlab{}.
\newblock \bibinfo{booktitle}{{\em Diffusion of innovations}}.
\newblock \bibinfo{publisher}{Simon and Schuster}.
\newblock


\bibitem[\protect\citeauthoryear{Varian}{Varian}{2009}]%
        {varian2009online}
\bibfield{author}{\bibinfo{person}{Hal~R Varian}.}
  \bibinfo{year}{2009}\natexlab{}.
\newblock \showarticletitle{Online ad auctions}.
\newblock \bibinfo{journal}{{\em The American Economic Review\/}}
  \bibinfo{volume}{99}, \bibinfo{number}{2} (\bibinfo{year}{2009}),
  \bibinfo{pages}{430--434}.
\newblock


\bibitem[\protect\citeauthoryear{Vickrey}{Vickrey}{1961}]%
        {vickrey1961counterspeculation}
\bibfield{author}{\bibinfo{person}{William Vickrey}.}
  \bibinfo{year}{1961}\natexlab{}.
\newblock \showarticletitle{Counterspeculation, auctions, and competitive
  sealed tenders}.
\newblock \bibinfo{journal}{{\em The Journal of Finance\/}}
  \bibinfo{volume}{16}, \bibinfo{number}{1} (\bibinfo{year}{1961}),
  \bibinfo{pages}{8--37}.
\newblock

\bibitem[\protect\citeauthoryear{Zhao, Li, Xu, Hao, and Jennings}{Zhao
  et~al\mbox{.}}{2018}]%
        {Zhao2018}
\bibfield{author}{\bibinfo{person}{Dengji Zhao}, \bibinfo{person}{Bin Li},
  \bibinfo{person}{Junping Xu}, \bibinfo{person}{Dong Hao}, {and}
  \bibinfo{person}{Nicholas~R. Jennings}.} \bibinfo{year}{2018}\natexlab{}.
\newblock \showarticletitle{Selling Multiple Items via Social Networks}. In
  \bibinfo{booktitle}{{\em Proceedings of the 17th International Conference on
  Autonomous Agents and MultiAgent Systems}} {\em (\bibinfo{series}{AAMAS
  '18})}. \bibinfo{publisher}{International Foundation for Autonomous Agents
  and Multiagent Systems}, \bibinfo{address}{Richland, SC},
  \bibinfo{pages}{68--76}.
\newblock
\showURL{%
\url{http://dl.acm.org/citation.cfm?id=3237383.3237400}}

\end{thebibliography}


\end{document}